\newcommand{\R}[0]{{\ensuremath{\mathbb{R}}}}
\newcommand{\I}{{\mathcal{I}}}
\renewcommand{\P}{{\mathcal{P}}}
\newcommand{\beq}{\begin{equation}}
\newcommand{\eeq}{\end{equation}}
\newcommand{\class}{{\tt cl}}
\newtheorem{theorem}{Theorem}[section]
\newtheorem{definition}[theorem]{Definition}
\newtheorem{lemma}[theorem]{Lemma}
\newtheorem{observation}[theorem]{Observation}
\newtheorem{claim}[theorem]{Claim}
\newcounter{note}[section]
\title{Non-uniform Geometric Set Cover and Scheduling on Multiple Machines}
\author{
Nikhil Bansal\thanks{CWI and TU Eindhoven, the Netherlands.
\texttt{bansal@gmail.com}. Supported by a NWO Vici grant 639.023.812.}
\and
Jatin Batra\thanks{CWI, the Netherlands.
\texttt{jatinbatra50@gmail.com}. Supported by a NWO Vici grant 639.023.812.}}
\date{}
\begin{document}

\maketitle
\begin{abstract}
We consider the following general scheduling problem studied recently by Moseley~\cite{Moseley19}. There are $n$ jobs, all released at time $0$, where job $j$ has size $p_j$ and an associated arbitrary non-decreasing cost function $f_j$ of its completion time.
The goal is to find a schedule on $m$ machines with minimum total cost. 
We give an $O(1)$ approximation for the problem,
improving upon the previous $O(\log \log nP)$ bound ($P$ is the maximum to minimum size ratio), and
resolving the open question in \cite{Moseley19}.

\medskip

We first note that the scheduling problem can be reduced to a clean geometric set cover problem where points on a line with arbitrary demands, must be covered by a minimum cost collection of given intervals with {\em non-uniform} capacity profiles.
Unfortunately, current techniques for such problems based on knapsack cover inequalities and low union complexity, completely
lose the geometric structure in the non-uniform capacity profiles and
incur at least an $\Omega(\log\log P)$ loss.

\medskip

To this end, we consider general covering problems with non-uniform capacities, and give a new method to handle capacities
in a way that completely preserves their geometric structure.
This allows us to use sophisticated geometric ideas in a black-box way to avoid the $\Omega(\log \log P)$ loss in previous approaches. 
In addition to the scheduling problem above, we use this approach to obtain $O(1)$ or inverse Ackermann type bounds 
for several basic capacitated covering problems.
\end{abstract}

\section{Introduction}
Recently, Moseley~\cite{Moseley19} considered the following general scheduling problem on multiple machines, 
that captures a wide class of problems studied previously, and much more.
There are $n$ jobs, all released at time $0$, and there are $m$ identical machines. Each job $j \in [n]$ has a processing requirement or size $p_j$ and an associated arbitrary non-decreasing, non-negative cost function $f_j$.
The goal is to find a schedule that minimizes $\sum_j f_j(c_j)$, where $c_j$ is the completion time of job $j$. 
As $f_j$ can be completely arbitrary and job dependent, this models several well-studied objectives such as arbitrary functions of weighted completion times, weighted tardiness, hard deadlines, their mixtures and so on \cite{BansalP14,ImMP14,Moseley19}.
Here we  allow jobs to be preempted and migrated across machines: a simple reduction from the Partition problem shows that no finite approximation is possible otherwise. 
We refer to this as the General Scheduling Problem  (GSP).

In the simpler single machine setting, the problem was considered by Bansal and Pruhs \cite{BansalP14}, who gave an $O(1)$ approximation, based on a strong LP relaxation based on Knapsack cover (KC) inequalities and viewing the constraints in a geometric way. They also gave an $O(\log \log P)$ approximation, where $P$ is the ratio of the maximum to minimum job size, when jobs have arbitrary release times $r_j$. Building on this formulation and underlying geometric ideas, there have been several improvements and breakthrough results on various classic scheduling problems \cite{BatraGK18,FeigeKL19,ImM17,CheungMSV17,HohnMW18,AntoniadisHMVW17}.

The multiple machine setting however seems much harder and the LP relaxation that works for the single machine case is too weak here. Until the recent work of Moseley \cite{Moseley19}, where he introduced new {\em job-cover inequalities}, no good LP relaxation was known. In fact, we still do not know of any good LP for arbitrary release times, and finding one seems to be a challenging open question.
Based on this new formulation, and using a sophisticated rounding based on quasi-uniform sampling \cite{Varadarajan10,ChanGKS12}, Moseley obtained an $O(\log \log nP)$ approximation for GSP.

\subsection{Our Results}
We give an $O(1)$ approximation for  GSP, answering an open question of Moseley \cite{Moseley19}.
\begin{theorem}
\label{thm:main} There is an $O(1)$ approximation for GSP on multiple machines with migration, when all the jobs have identical release times.
\end{theorem}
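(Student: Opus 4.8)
The plan is to reduce GSP to a geometric set cover problem with non-uniform capacities, and then to round a strong LP relaxation of the latter using a new technique for handling capacities that preserves geometric structure. First I would set up the reduction: discretize time into geometrically increasing intervals, and observe that a schedule on $m$ machines can be summarized by, for each block of time $[0,t]$, how much total processing volume is available — namely $mt$ — and how this is consumed by jobs completing after $t$. Following the single-machine viewpoint of Bansal--Pruhs \cite{BansalP14} and Moseley's job-cover inequalities \cite{Moseley19}, each job $j$ becomes a point with demand $p_j$ (or rather, the ``slack'' in volume it creates), and the decision of when to complete $j$ at cost roughly $f_j(t)$ corresponds to buying an interval whose left endpoint is $0$ and whose right endpoint is $t$; crucially, because only $m$ machines are available, such an interval does not cover a full $p_j$ units of demand everywhere but has a \emph{capacity profile} that is a step function of available volume $mt$ minus committed volume. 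So the problem becomes: cover points on a line with arbitrary integer demands by a minimum-cost subcollection of given intervals, each with its own non-uniform capacity profile.

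Next I would write down the natural LP with knapsack-cover (KC) inequalities strengthened for capacities: for every point $x$ and every subset $S$ of intervals, the intervals outside $S$ must contribute, at $x$, at least the residual demand $\max(0,\; d(x) - \sum_{i \in S} \mathrm{cap}_i(x))$. The key conceptual step — and I expect this to be the main obstacle — is to round this LP without losing the geometric structure of the capacity profiles. The standard approach of quasi-uniform sampling \cite{Varadarajan10,ChanGKS12} combined with low union complexity, as used by Moseley, forces one to chop each non-uniform profile into $O(\log P)$ uniform-capacity pieces (one per scale of the step function), and the union complexity / sampling analysis then pays an $\Omega(\log\log P)$ factor over these pieces. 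I would instead argue that one can treat the capacity profile \emph{as a single geometric object}: decompose the residual-demand covering requirement at each point into ``layers'' indexed by how much capacity is already guaranteed, and show that within each layer the relevant intervals behave like uniformly-capacitated (indeed, like uncapacitated) intervals on a sub-line, so that a constant-factor rounding for the uncapacitated geometric set cover problem (points and intervals on a line, which has a well-understood LP integrality gap of $O(1)$) can be applied in a black-box way. The subtle point is to ensure the layers do not interact: a single interval participates in many layers, so I would need a charging argument showing that the total cost over all layers telescopes, using that the capacity profiles are monotone step functions arising from the scheduling structure (volume $mt$ is concave/linear, commitments are monotone), which bounds how an interval's cost is split across layers by $O(1)$ rather than by the number of distinct capacity values.

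Assembling the pieces: (i) the reduction from GSP to capacitated interval cover loses only $O(1)$ after geometric discretization of time and cost; (ii) the strengthened KC-LP for capacitated interval cover has an $O(1)$ integrality gap via the layering-plus-black-box-geometric-rounding argument; (iii) the LP itself can be solved (or approximately separated) in polynomial time since KC inequalities admit the usual ellipsoid/separation treatment with a small multiplicative loss. Combining (i)--(iii) yields an $O(1)$-approximate schedule, which must then be converted back into an actual preemptive migratory schedule on $m$ machines — this last step is a standard volume-feasibility argument (a fractional assignment of volumes to time blocks respecting $mt$ bounds can be realized as a genuine preemptive schedule by a McNaughton-style wrap-around), completing the proof of Theorem~\ref{thm:main}. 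I expect the layering argument in step (ii), and specifically proving the $O(1)$ bound on how much an interval's cost is charged across layers, to be where essentially all the work lies; everything else is reduction bookkeeping.
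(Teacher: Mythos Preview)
Your high-level plan (reduce to a capacitated interval-cover problem, strengthen with KC inequalities, then round) matches the paper, but two concrete things are off.

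First, the reduction is inverted. In the paper, the points to be covered are the time values $b=0,1,\ldots,\sum_j p_j$ with demands $d_b=\sum_j p_j - mb$, and the \emph{sets} are the job/deadline choices: choosing deadline $c$ for job $j$ contributes the wedge profile $u_{p_j,c}(b)=\min(p_j,\max(c-b,0))$. This is exactly Moseley's feasibility condition read as a covering constraint. Your description (``each job $j$ becomes a point with demand $p_j$'', intervals all with left endpoint $0$, step-function profiles) swaps the roles and leads to profiles of the wrong shape. After a telescoping trick on the wedges, the paper's profiles are \emph{triangles of slope $1$ and rectangles}, not step functions; this specific shape is what later makes the union complexity linear.

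Second, and more importantly, your rounding step is the place where the proposal does not actually cross the $\log\log P$ barrier. You propose to decompose the residual demand into layers, solve an uncapacitated 1D interval cover inside each layer, and then argue the charges across layers telescope to $O(1)$ per set. This is essentially the Chakrabarty--Grant--K\"onemann route, and the paper explicitly identifies it as where the loss comes from: once a non-uniform profile is sliced across capacity scales, the same set participates in many sub-instances, and no telescoping is available in general. The paper's fix is different and sidesteps the charging issue entirely: it \emph{lifts} to one higher dimension, turning each profile $c_z$ into a single induced object $\bigcup_p \{p\}\times[0,c_z(p)]$ in $\R^{d+1}$, and creates \emph{one} (multi-)set-cover instance there with points $q_{p,j}=(p,2^{-j}d'_p)$ and carefully chosen multiplicities $m_{p,j}$ derived from the KC-LP. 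Because each original set becomes exactly one lifted object, there is no cross-layer charging at all; the entire burden shifts to bounding the union complexity of the lifted objects. For TRC these are axis-aligned rectangles and right isosceles triangles in $\R^2$, whose union complexity is $O(t)$, so quasi-uniform sampling gives $O(1)$ directly. Your intuition ``treat the capacity profile as a single geometric object'' is exactly right, but the execution you sketch stays in 1D per layer and therefore does not realize it; the missing idea is the lifting to $\R^{d+1}$ and the single multi-cover instance.
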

Our starting point is an observation that the job-cover inequalities in \cite{Moseley19} can be viewed in a clean geometric way.
This allows us to reduce GSP (up to $O(1)$ factors) to a problem of covering demands on a line by intervals that have triangular or rectangular capacities (see Figure \ref{fig:trc}).
Formally, consider the following {\em TRC problem}. 
The input consists of points on a line, and a collection of intervals. A point $p$ has demand $d_p$, and an interval $z=[a_z,b_z]$ has cost $w_z$, and it contributes capacity $c_z(p)$ to $p\in z$. Let us call $c_z$ the capacity profile or simply {\em profile} of $z$. The profiles are either (i) rectangular, i.e.~$c_z(p) = c$ for all $p\in z$, or 
(ii) triangular with slope $1$, i.e.~$c_z(p) = p-a_z$  or $c_z(p) = b_z-p$.
 Find a minimum cost subset $Z$ of intervals so that each demand is satisfied, i.e.~$\sum_{z \in Z} c_{z}(p) \geq d_p$ for all $p$.

In other words, the TRC problem is the same as the well-studied UFP-cover problem on a line~\cite{BarNoyBFNS01,ChakrabartyGK10,   HohnMW18}, except that the capacity profiles of intervals can be {\em non-uniform}, and in particular, triangular with slope $1$. 

We show the following reduction, which implies that to prove Theorem \ref{thm:main}, it suffices to obtain an $O(1)$ approximation for TRC. 
\begin{theorem}
\label{thm:redn}
For any $\alpha \geq 1$, an $\alpha$ approximation for TRC implies a $12\alpha$ approximation for GSP.  
\end{theorem}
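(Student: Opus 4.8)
The plan is to take literally the statement that Moseley's job-cover inequalities ``can be viewed in a clean geometric way'': I will exhibit a TRC instance whose covering constraints are, up to $O(1)$ factors, exactly (a discretized form of) the job-cover inequalities, so that feasible $0/1$ solutions of the TRC instance are in cost-preserving correspondence with near-optimal schedules. There are three stages. First, discretize the costs: for each job $j$ and integer $i$ put $D_j^{(i)}=\sup\{c: f_j(c)\le 2^i\}$; we will only ever use these polynomially many candidate ``(job, deadline-level)'' pairs, and moving between the value $f_j(c_j^\star)$ in a fixed schedule and the enclosing power of two costs a factor $2$. Second, rewrite the job-cover inequalities geometrically and split each job's capacity profile into a rectangular and a triangular TRC interval. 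Third, translate a feasible TRC solution back into an actual preemptive migratory schedule. Composing the constants lost in the three stages yields the factor $12$.

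For the geometric rewriting I will use the following feasibility characterization, which is essentially the job-cover family of \cite{Moseley19} restricted to the full job set (that restriction already captures integral feasibility, even though all subsets are needed for a strong LP): a deadline vector $(D_j)$ is realizable by a preemptive migratory schedule on $m$ machines with all release times $0$ if and only if
\[
\sum_j \min\!\big(p_j,\,(D_j-t)^+\big)\ \ge\ \sum_j p_j - mt \qquad\text{for all } t\ge 0 .
\]
(The easy direction says that the work \emph{forced} to lie before time $t$ is at most $mt$; the truncation $\min(p_j,\cdot)$ is precisely the single-machine bottleneck ``job $j$ does at most $\ell$ units of work in an interval of length $\ell$'', and $t\to 0$ recovers $p_j\le D_j$.) Reading $t$ as a coordinate on a line, the contribution of the pair $(j,i)$ to the point $t$ equals $p_j$ for $t\le D_j^{(i)}-p_j$ and $D_j^{(i)}-t$ for $t\in[D_j^{(i)}-p_j,\,D_j^{(i)}]$: a rectangular interval of height $p_j$ on $[0,\,D_j^{(i)}-p_j]$, plus a descending triangular interval of slope $1$ on $[D_j^{(i)}-p_j,\,D_j^{(i)}]$, and we assign each of these two intervals cost $2^i$ (this is the second factor of $2$). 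The demand at the point $t$ is $\big(\sum_j p_j - mt\big)^+$; since both sides of the displayed inequality are piecewise linear in $t$, it suffices to place points only at the breakpoints $\{D_j^{(i)},\,D_j^{(i)}-p_j\}$, giving a genuine TRC instance. Note the triangles have slope exactly $1$ because a single machine runs at unit rate; the factor $m$ enters only through the demand values.

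It remains to check the two directions. From an optimal GSP schedule with completion times $c_j^\star$, choose for each $j$ the cheapest deadline-level $i$ that still permits completion by $c_j^\star$, i.e.\ with $D_j^{(i)}\ge c_j^\star$; then $(D_j^{(i)})$ is realized by the same schedule, so its rectangle/triangle pairs satisfy every TRC demand, and the cost is $\sum_j 2\cdot 2^{i}\le 4\sum_j f_j(c_j^\star)$. Conversely, given a feasible TRC solution $Z$, for each job $j$ let $D_j$ be the largest deadline-level occurring (in its rectangular \emph{or} its triangular interval) among pairs $(j,\cdot)\in Z$ --- every job occurs, since the demand as $t\to 0$ equals $\sum_j p_j$ and each job contributes at most $p_j$ --- and replace all of $Z$'s pieces for $j$ by the single trapezoid (rectangle $+$ triangle) at level $D_j$; this can only lose another bounded factor in cost, and now the chosen trapezoid pointwise dominates what $Z$ used for $j$, so $(D_j)$ still satisfies the displayed inequality at every breakpoint, hence at every $t$, hence is realizable, and the resulting schedule completes $j$ by $D_j$ and costs at most $\sum_j f_j(D_j)\le \mathrm{cost}(Z)$. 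A careful accounting of these discretization, splitting, and ``cleanup'' losses across the two reductions gives the stated $12\alpha$. The step I expect to be the crux is the geometric rewriting together with this last point: recognizing that the combinatorially stated job-cover inequalities have exactly the trapezoid $=$ rectangle $+$ slope-$1$-triangle form --- in particular that the knapsack-style truncation $\min(p_j,\cdot)$ becomes a clean slope-$1$ triangle rather than destroying the geometry --- and making sure that a TRC solution which freely mixes rectangular and triangular pieces from \emph{different} deadline-levels of the same job can still be converted, at $O(1)$ cost, into a single realizable migratory schedule.
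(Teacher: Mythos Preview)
Your geometric rewriting of the feasibility condition and the forward direction (GSP $\to$ TRC) are essentially the paper's. The gap is in the backward direction. You build the TRC instance directly from the wedges $W_{j,i}(t)=\min(p_j,(D_j^{(i)}-t)^+)$, split into a rectangle and a descending triangle. Given a feasible TRC solution $Z$, you take $D_j$ to be the largest deadline-level appearing for $j$ in $Z$ and assert that the single wedge at that level ``pointwise dominates what $Z$ used for $j$''. This is false. If $Z$ picked rectangles from two different levels $i_1<i_2$ for the same job $j$, then at any $t$ in the flat range of both wedges the contribution of $Z$ is $2p_j$, while the single wedge $W_{j,i_2}$ contributes only $p_j$. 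After your replacement the deadline vector $(D_j)$ need no longer satisfy the job-cover inequality, and you have no schedule. The same over-contribution also breaks your claim that ``every job occurs'': at $t=0$ a feasible $Z$ can meet the demand $\sum_j p_j$ by double-counting some jobs and omitting others entirely.

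The paper's construction is precisely the missing idea. Instead of the wedges $W_{j,i}$ themselves, it puts the \emph{differences} $T_{j,i}=W_{j,i}-W_{j,i-1}$ (with $T_{j,0}=W_{j,0}$ at zero cost) into the covering instance. Then for any picked subset $S\subseteq\{0,\dots,i(j)\}$ one has $\sum_{g\in S}T_{j,g}\le\sum_{g=0}^{i(j)}T_{j,g}=W_{j,i(j)}$ pointwise, so choosing the wedge at the largest picked level genuinely dominates, and the zero-cost $T_{j,0}$ handles jobs that $Z$ omits. The price is that each $T_{j,i}$ is a \emph{trapezoid} with both a rising and a falling triangular part plus a rectangle---three pieces, not two---which, together with the factor $4$ relating the GSP and trapezoid-cover optima, is where the $12$ actually comes from. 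Your own accounting (wedge $=$ rectangle $+$ one triangle) would yield at most $4\alpha$ if the argument worked, which is another indication that the backward step is not doing what you need.
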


\vspace{-3mm}

\paragraph{Capacitated set cover.}
This leads us to consider the TRC problem, and more general covering problems with non-uniform capacities.

A very general and systematic approach for capacitated covering problems,  based on KC inequalities \cite{CarrFLP00}, was developed by Chakrabarty et al.~\cite{ChakrabartyGK10}.
They show that any capacitated covering instance $I$ can be reduced to multiple (uncapacitated) set cover sub-instances, where (roughly speaking) each resulting sub-instance corresponds to a different capacity scale in $I$. If sets in $I$ have uniform capacities, i.e.~$c_z(p)=c_z$ for all $p\in z$, then each set lies in a different sub-instance and hence an $\alpha$-approximate solution to each of the sub-instances can be combined to obtain an $O(\alpha)$-approximation for $I$. 

Unfortunately, this framework gives much worse results when the set capacities are non-uniform, as a set $S$ can now lie in multiple sub-instances, and combining these sub-instances can lead to up to an $O(\log C)$ loss in general (where $C$ is the maximum to minimum capacity ratio). 
In the case of TRC,
one can additionally exploit the geometric structure of profiles (using non-trivial ideas from quasi-uniform sampling) to reduce this loss to $O(\log \log C)$, but this seems to be the limit of this approach. This was also the implicit reason for the $\log \log nP$ loss in the result in \cite{Moseley19}.
It is useful to contrast this with UFP-cover on a line with uniform capacities, for which there are several different $O(1)$ approximations \cite{BarNoyBFNS01, ChakrabartyGK10, HohnMW18}.

\vspace{-3mm}

\paragraph{Handling non-uniform capacities.}
Our main contribution is a new general way to reduce non-uniform capacitated covering problems to (uncapacitated) set cover problems, that preserves the structure of capacities in the original instance. This allows us to use sophisticated geometric machinery based on low union/shallow-cell complexity and small $\epsilon$-nets in a black box way, to avoid the losses inherent in previous approaches.
In particular, this directly gives an $O(1)$ approximation for TRC, and other non-trivial approximation results for much more general problems.

We now state our general result.
Even though this reduction is combinatorial, it is best stated in geometric terms.
\begin{theorem}\label{thm:geomcap}
Consider an instance $\I$ of covering demands $d_p$ of points $p \in \R^{d}$ with a minimum cost collection of sets $z$ with capacity profiles $c_z(p)$.
With each set $z$, associate an induced object in $\R^{d+1}$ given by $\cup_p (p \times [0,c_z(p)])$. 

Then, there is an efficiently constructible set cover instance $\P$ in $\R^{d+1}$ with sets corresponding to the objects induced by the profiles $z$, that satisfies the following property: any $\gamma$-approximation for $\P$ based on rounding the standard LP relaxation for set cover, gives a $9\gamma$-approximation for $\I$.
\end{theorem}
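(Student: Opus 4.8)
The plan is to follow the Chakrabarty--Grandoni--K\"onemann (CGK) framework based on knapsack cover (KC) inequalities, but to organize the resulting set-cover sub-instances \emph{geometrically} so that they assemble into a single set-cover instance $\P$ in one higher dimension, rather than into a family of disjoint instances. First I would write down the KC-strengthened LP for $\I$: for every point $p$ and every subset $A$ of sets that does not already cover $d_p$, impose $\sum_{z\notin A}\min\{c_z(p),\, d_p-c(A,p)\}\,x_z \ge d_p-c(A,p)$, where $c(A,p)=\sum_{z\in A}c_z(p)$. Take any feasible fractional solution $x$ (e.g.\ an optimal one, or an $O(1)$-approximate one obtained by the standard ellipsoid-with-separation argument of \cite{CarrFLP00,ChakrabartyGK10}). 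The goal is to round $x$ to an integral cover of cost $O(1)\cdot\mathrm{OPT}_{\mathrm{LP}}$ by invoking a black-box LP-rounding $\gamma$-approximation for a single set-cover instance $\P$.

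Next I would define $\P$. The key move is the lifting already suggested in the statement: each set $z$ becomes the region $\hat z=\bigcup_p\big(p\times[0,c_z(p)]\big)\subseteq\R^{d+1}$, i.e.\ the ``area under the capacity profile.'' For each point $p$ with residual demand, I would place a small cluster of \emph{proxy points} in the vertical fiber $\{p\}\times\R_{\ge0}$, located at a geometrically spaced set of heights below $d_p$ (heights $d_p/2, d_p/4,\dots$, truncated appropriately), so that covering all proxy points of $p$ in $\P$ certifies that the chosen sets contribute total capacity $\Omega(d_p)$ at $p$ --- exactly because a height-$h$ proxy point at $p$ is contained in $\hat z$ iff $c_z(p)\ge h$, so an integral solution covering that proxy must include a set of capacity $\ge h$ at $p$, and summing a geometric ladder of such guarantees recovers the demand up to a constant. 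This is the step where the constant (the $9$) is paid: the geometric-ladder argument plus the standard KC ``either a big set is picked, or many fractional small sets are'' dichotomy. Crucially, $\P$ inherits exactly the geometry of the induced objects $\hat z$, so any union-complexity / shallow-cell / $\epsilon$-net structure of the profile family is preserved verbatim and can be used by the black-box rounder.

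Then I would argue feasibility of the lifted fractional solution: the same $x$ (with each $x_z$ assigned to $\hat z$), possibly scaled by a constant, is feasible for the LP of $\P$, because KC feasibility of $x$ at $p$ implies the fractional mass of sets covering each proxy point of $p$ is $\Omega(1)$ --- here one uses the standard trick of splitting sets at $p$ into ``large'' ($c_z(p)$ comparable to the residual demand) and ``small,'' noting the large ones already give constant fractional coverage of every proxy below them, and the small ones behave like a uniform-capacity instance for which KC inequalities give constant coverage at each height scale. Applying the $\gamma$-approximate LP rounding to $\P$ yields an integral cover $Z$ of $\P$ with $w(Z)\le\gamma\cdot O(1)\cdot\mathrm{OPT}_{\mathrm{LP}}(\I)$; unwinding the ladder shows $Z$ satisfies every demand in $\I$, giving the claimed $9\gamma$ bound after tracking constants.

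The main obstacle I anticipate is making the proxy-point placement simultaneously (a) finite and polynomial-size, (b) strong enough that covering them forces $\sum_{z\in Z}c_z(p)\ge d_p$ with \emph{no} slack loss beyond a constant, and (c) not so demanding that the lifted fractional solution becomes infeasible --- i.e.\ getting the same constant to work on both the ``integral cover $\Rightarrow$ demand satisfied'' side and the ``fractional KC solution $\Rightarrow$ fractional cover of proxies'' side. The delicate point is the interaction between sets with wildly different capacities at the same point $p$ (the non-uniform regime), which is precisely where the earlier $O(\log C)$ / $O(\log\log C)$ losses crept in; the geometric ladder of proxy points is designed so that each set's profile ``lands'' at only $O(1)$ relevant height scales per point in a charging sense, but verifying this clean $O(1)$ overlap --- rather than an $O(\log)$ overlap --- is where the real work lies.
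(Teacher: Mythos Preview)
Your outline tracks the paper's approach closely --- KC-strengthened LP, lifting profiles to induced objects in $\R^{d+1}$, and placing proxy points at geometrically spaced heights $2^{-j}d'_p$ in the fiber over each $p$ --- but it has a real gap at the ``integral cover of proxies $\Rightarrow$ demand satisfied'' step. Your ladder argument asserts that covering the proxy at height $h$ forces a set with $c_z(p)\ge h$, and that ``summing a geometric ladder of such guarantees recovers the demand up to a constant.'' This is false for ordinary (cover-once) set cover: a single set $z$ with $c_z(p)=d_p/2$ covers \emph{every} proxy at heights $d_p/2, d_p/4,\dots$, yet contributes only $d_p/2$ total capacity at $p$. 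The guarantees at different heights are not additive because the same set can realize all of them.

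The paper's resolution is the piece you are missing: the instance $\P$ is a \emph{multi-cover} instance whose coverage requirements are read off from the LP solution itself. After integrally picking all sets with $x_z\ge 1/\beta$ (for $\beta=8$) and scaling the remaining $x$ by $\beta$ to get $x'$ with $\sum_z \min(c_z(p),d'_p)\,x'_z\ge \beta d'_p$, the paper sets the requirement of the proxy $q_{p,j}=(p,2^{-j}d'_p)$ to $m_{p,j}=\big\lfloor \sum_{z:\,c_z(p)\ge 2^{-j}d'_p} x'_z\big\rfloor$. Then (i) feasibility of $x'$ for the LP of $\P$ is immediate by construction of $m_{p,j}$, and (ii) any integral multi-cover $\tilde S$ yields, via a telescoping count of how many sets have class exactly $j$ at $p$, total capacity $\sum_j |\tilde S_{p,j}|\,2^{-j-1}d'_p \ge \sum_j (m_{p,j})\,2^{-j-1}d'_p$, which unwinds to $\ge (\beta/4-1)d'_p\ge d'_p$. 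The LP-dependent multiplicities are exactly what makes both directions work with the same constant; you correctly flagged this tension in your last paragraph, but the proposal as written does not resolve it.
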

Intuitively, this means that we can simply include the capacity profile of a set into its shape, and work with these new induced (uncapacitated) sets.
For example, in the TRC problem, the induced objects simply becomes rectangles and triangles. See for example, the right half of Figure~\ref{fig:trc}, which shows the induced objects corresponding to the sets for the instance on the left. More formally, for the profile $c_z(p) =c_z$ for $p\in [a_z,b_z]$, the induced object is an axis-parallel rectangle with corners $(a_z,0)$ and $(b_z,c)$.
Similarly if $c_z(p)=b_z-p$ for $p \in [a_z,b_z]$, 
the induced object is a (right angled isosceles) triangle with vertices $(a_z,0), (a_z,b_z-a_z)$ and $(b_z,0)$.

\begin{figure}
    \centering
    \includegraphics[width=\textwidth]{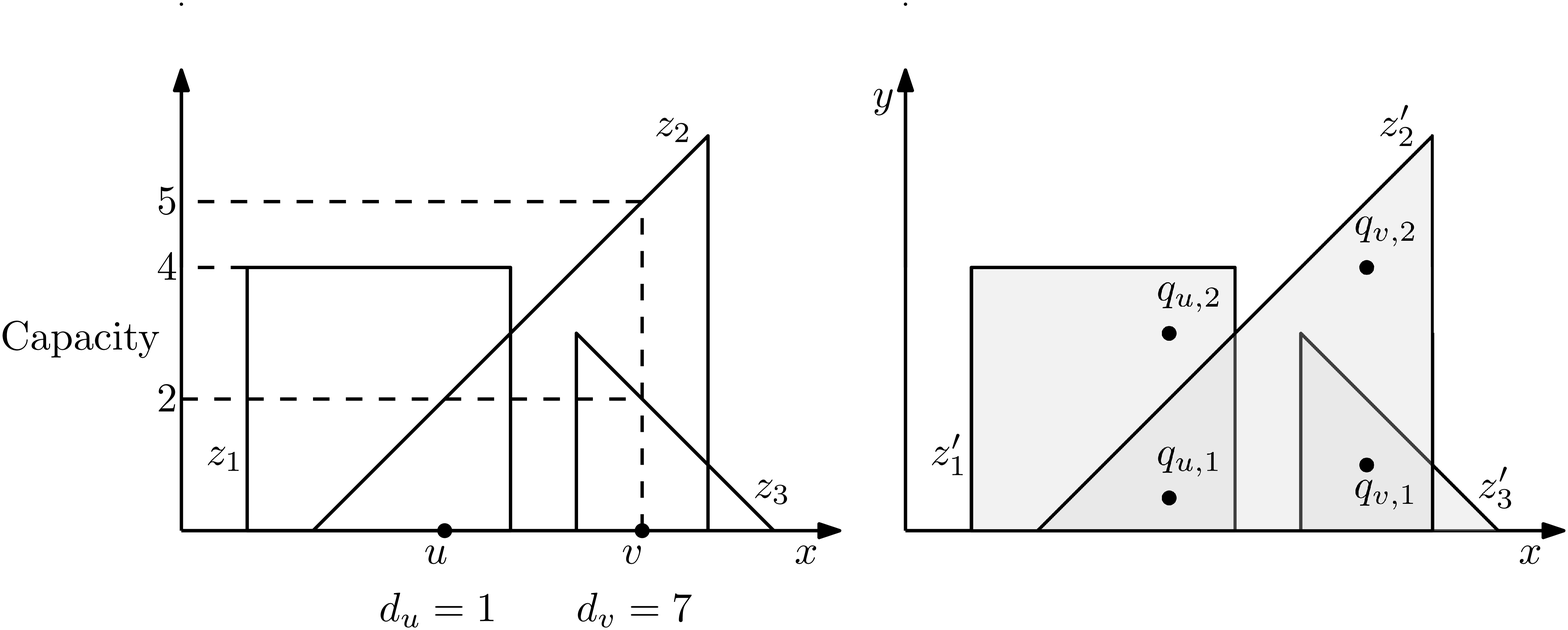}
    \caption{1. A TRC instance with $Z = \{z_1,z_2,z_3\}$ and two points $u,v$. The solution $\{z_2,z_3\}$ is feasible since $z_2(u) + z_3(u) = 2 + 0 \geq 1 = d_{u}$ and $z_2(v) + z_3(v) = 5 + 2 = 7 = d_{v}$.
    \newline 2. The (uncapacitated) set cover problem resulting from the TRC instance. The points to be covered are $q_{u,i},q_{v,i}$ for $i=1,2$ and the sets are $z'_1,z'_2,z'_3$.
    }\label{fig:trc}
\end{figure}

As the union complexity (see definitions below) of any $t$ such triangles is easily checked to be $O(t)$ , the known results  for geometric set cover with low union complexity \cite{Varadarajan10,ChanGKS12,BansalP16} directly give the following result, which together with Theorem \ref{thm:redn} gives Theorem \ref{thm:main}.
\begin{theorem}\label{thm:trc}
There is a $O(1)$-approximation for the TRC problem\footnote{ 
In fact one does not even require that the triangles arising in TRC have slope $1$}.
\end{theorem}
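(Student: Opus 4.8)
The plan is to combine Theorem~\ref{thm:geomcap} with known black-box results for geometric set cover. The first step is to invoke Theorem~\ref{thm:geomcap} with $d=1$: the TRC instance $\I$ gets transformed into a set cover instance $\P$ in $\R^2$, where the sets are precisely the induced objects of the interval profiles. By the explicit descriptions recalled just after Theorem~\ref{thm:geomcap}, a rectangular profile $c_z(p)=c$ on $[a_z,b_z]$ induces an axis-parallel rectangle with one side on the $x$-axis, and a triangular profile of slope $1$ induces a right-angled isosceles triangle with one (horizontal) side on the $x$-axis and legs of slope $0$ and $\pm 1$. So $\P$ is a set cover instance whose ranges are drawn from a family of very simple planar regions; it remains only to bound the combinatorial complexity of this family and feed it into an off-the-shelf rounding theorem.

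The second step is the union-complexity bound. The key observation is that all these induced objects are pseudo-disks (more precisely, the boundary of any two of them crosses at most a constant number of times): two axis-parallel rectangles with a common grounding line behave like pseudo-disks, two of these triangles cross in $O(1)$ points since a line of slope $+1$ and a line of slope $-1$ meet once, and a rectangle and a triangle likewise cross $O(1)$ times. Hence the union of any $t$ of them has complexity $O(t)$ --- this is the standard fact that $t$ pseudo-disks have linear union complexity, and for these axis-aligned/slope-$1$ shapes it can be checked directly by a charging argument on the boundary arcs. (Equivalently, one can note the dual set system has bounded shallow-cell complexity.) With a linear union-complexity bound in hand, the third step is purely a citation: the machinery of Varadarajan~\cite{Varadarajan10}, Chan et al.~\cite{ChanGKS12}, and Bansal--Pruhs~\cite{BansalP16} gives an $O(1)$-approximation for weighted geometric set cover with objects of linear union complexity, and crucially this approximation is obtained by rounding the standard LP relaxation. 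Therefore $\gamma = O(1)$ in Theorem~\ref{thm:geomcap}, which yields a $9\gamma = O(1)$ approximation for the TRC instance $\I$.

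The only subtlety --- and the part that needs the most care --- is making sure the hypothesis of Theorem~\ref{thm:geomcap} is met: the $\gamma$-approximation for $\P$ must come from \emph{rounding the standard set cover LP}, not from some combinatorial or local-search algorithm. The quasi-uniform sampling results of \cite{Varadarajan10,ChanGKS12} are indeed LP-based (they round a fractional solution via an $\epsilon$-net argument against the LP's fractional set mass), so this is satisfied, but one should state it carefully rather than just quoting an approximation ratio. A second point worth a sentence is the footnote's claim: the argument nowhere uses slope exactly $1$; any fixed finite set of slopes still yields triangles that pairwise cross $O(1)$ times, hence still linear union complexity, so the same conclusion holds. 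Modulo these remarks, the proof is a short composition: Theorem~\ref{thm:geomcap} reduces TRC to planar set cover with pseudo-disk-like ranges, linear union complexity gives an $O(1)$ LP-rounding approximation there, and Theorem~\ref{thm:geomcap} transports it back with only a constant-factor loss; combined with Theorem~\ref{thm:redn} this is exactly Theorem~\ref{thm:main}.
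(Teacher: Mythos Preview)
Your high-level strategy is exactly the paper's: apply Theorem~\ref{thm:geomcap}, bound the union complexity of the induced planar objects, and invoke the LP-rounding results of \cite{Varadarajan10,ChanGKS12,BansalP16}. The gap is in the union-complexity step. You assert that the mixed family of grounded rectangles and triangles of slope $\pm 1$ consists of pseudo-disks, but this is false. Take the slope-$1$ triangle $A$ with vertices $(0,0),(10,0),(10,10)$ and the slope-$(-1)$ triangle $B$ with vertices $(5,0),(5,10),(15,0)$: their boundaries cross three times (at $(5,5)$, $(7.5,7.5)$, and $(10,5)$), and the upper envelope of just these two objects already has the pattern $ABAB$. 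So the family is not a pseudo-disk family, and your appeal to the ``standard fact that $t$ pseudo-disks have linear union complexity'' does not apply. More generally, ``boundaries cross $O(1)$ times'' does \emph{not} imply $O(t)$ union complexity: the upper envelope of $t$ line segments (which pairwise cross at most once) has worst-case complexity $\Theta(t\,\alpha(t))$, which would only yield an $O(\log \alpha(n))$ approximation --- not $O(1)$ in the strict sense.

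The paper sidesteps this by first splitting the LP (at a factor-$3$ loss) into three separate sub-instances according to object type: rectangles $R$, triangles $T(1)$ of slope $+1$, and triangles $T(-1)$ of slope $-1$. Within each class the upper-envelope edge sequence is a Davenport--Schinzel sequence of order at most $2$ (order $1$ for the parallel-hypotenuse triangles, order $2$ for the rectangles), so each class genuinely has $O(t)$ union complexity and the cited LP-rounding machinery gives an $O(1)$ factor per class. The same separation trick handles the footnote about arbitrary slopes: first split by the sign of the slope (factor $2$), and then observe that right triangles of a common sign still have upper envelopes that are DS of order $2$. Your sentence ``any fixed finite set of slopes \ldots\ hence still linear union complexity'' needs this partitioning step to be correct; without it the claim fails for the same reason as above.
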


More generally, there has been extensive work on geometric set cover where the underlying sets (objects) have low union complexity, or more generally shallow-cell complexity, and for unweighted geometric set cover where the dual of the underlying set system has small $\epsilon$-nets (we give the relevant definitions in Section \ref{sec:prel}. 
In particular, Theorem \ref{thm:geomcap} together with the results of Varadarajan \cite{Varadarajan10,ChanGKS12, BansalP16} for weighted set cover with low union complexity objects
gives the following result.
\begin{theorem}\label{thm:geomcapeasy}
Consider an instance $\I$ of covering demands $d_p$ of $n$ points $p \in \R^{d}$ with a minimum cost collection of profiles (or sets) $z$ having capacities $c_z(p)$.
If the union complexity of any $t$ of the induced objects in $\R^{d+1}$  is $O(t \phi(t))$, then $\I$ has a $O(\log \phi(n))$ approximation.
\end{theorem}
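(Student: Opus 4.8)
The plan is to compose Theorem~\ref{thm:geomcap} with the existing LP-rounding algorithms for weighted geometric set cover on set systems of low union complexity, so that essentially no new idea is needed beyond Theorem~\ref{thm:geomcap}. First I would apply Theorem~\ref{thm:geomcap} to $\I$, obtaining in polynomial time a set cover instance $\P$ in $\R^{d+1}$ whose sets are exactly the induced objects $\cup_p (p \times [0,c_z(p)])$ of the profiles $z$, and whose ground set has size $N = n^{O(1)}$. By hypothesis the union complexity of any $t$ of these induced objects --- equivalently, of any $t$ sets of $\P$ --- is $O(t\,\phi(t))$, so $\P$ is exactly the kind of instance handled by the quasi-uniform sampling literature.

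Next I would invoke the framework of Varadarajan~\cite{Varadarajan10}, with the refinements of Chan et al.~\cite{ChanGKS12} and Bansal and Pruhs~\cite{BansalP16}: a weighted set cover instance on a range space whose ranges have union complexity $O(t\,\phi(t))$ (with $\phi$ non-decreasing) admits an efficient algorithm that rounds the standard set cover LP to an integral solution of cost $O(\log\phi(N))$ times the LP optimum. Applying this to $\P$ yields a $\gamma$-approximation with $\gamma = O(\log\phi(N))$, obtained by rounding the standard LP relaxation for set cover --- precisely the form of guarantee required by Theorem~\ref{thm:geomcap}.

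Finally, Theorem~\ref{thm:geomcap} converts this into a $9\gamma$-approximation for $\I$, and it remains to check $\log\phi(N) = O(\log\phi(n))$. Since the induced objects have bounded description complexity their union complexity is polynomially bounded, i.e.\ $\phi(t) = t^{O(1)}$; together with $N = n^{O(1)}$ this gives $\phi(N) \le \phi(n)^{O(1)}$ and hence $\log\phi(N) = O(\log\phi(n))$. This yields the claimed $O(\log\phi(n))$ approximation for $\I$.

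The step I expect to need the most care is the bookkeeping hidden in the first sentence: one must make sure that the reduction underlying Theorem~\ref{thm:geomcap} blows the instance up only polynomially, and --- crucially --- that it introduces no auxiliary sets beyond the induced objects themselves (for instance, none added merely to enforce feasibility of the underlying knapsack-cover formulation), since such extra sets could spoil the union-complexity bound and hence the $\log\phi$ guarantee. This is exactly what the clean statement of Theorem~\ref{thm:geomcap} provides; once it is in hand, the rest is a black-box appeal to known results.
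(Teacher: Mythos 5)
Your proposal is correct and is essentially identical to the paper's (very terse) argument: Theorem~\ref{thm:geomcapeasy} is obtained by composing Theorem~\ref{thm:geomcap} with the quasi-uniform-sampling guarantee of Theorem~\ref{thm:unionmulticover} (Varadarajan/Chan et al., with the multi-cover extension of Bansal--Pruhs needed because $\P$ has covering requirements $m_{p,j}\geq 1$), noting that $\P$ contains only the induced objects themselves so the union-complexity hypothesis transfers directly. Your closing bookkeeping concerns (polynomial instance size, no auxiliary sets, $\log\phi(N)=O(\log\phi(n))$) are exactly the points the paper leaves implicit.
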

This also directly extends to the more general shallow-cell complexity framework of Chan et al.~\cite{ChanGKS12} in a straightforward way.
Similarly, for unweighted or minimum cardinality set cover problems, the framework
of Br\"{o}nnimann and Goodrich~\cite{BronnimannG95} and Even et al.~\cite{EvenRS05} gives the following.
\begin{theorem}\label{thm:geomcapeasy2}
Consider an instance $\I$ of covering demands $d_p$ of $n$ points $p \in \R^{d}$ with a minimum cardinality collection of profiles (or sets) $z$ having capacities $c_z(p)$.
If the dual system of the induced objects has $\epsilon$-nets of size $(1/\epsilon) \Phi(1/\epsilon)$, then there is an LP-based
 $O(\Phi(OPT))$ approximation for $\I$.
\end{theorem}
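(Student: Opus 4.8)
The plan is to combine Theorem~\ref{thm:geomcap} with the classical Br\"{o}nnimann--Goodrich / Even--Rav-Srinivasan multiplicative-weights framework for unweighted set cover. Recall that for an (uncapacitated) set system whose \emph{dual} set system admits $\epsilon$-nets of size $(1/\epsilon)\,\Phi(1/\epsilon)$, the Br\"onnimann--Goodrich method produces a set cover of size $O(\mathrm{OPT}\cdot\Phi(\mathrm{OPT}))$; moreover, as observed by Even et al.~\cite{EvenRS05}, this procedure can be implemented via the natural LP relaxation, in the sense that the reweighting is guided by an LP solution and the output is within an $O(\Phi(\mathrm{OPT}))$ factor of the LP optimum. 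The only mismatch to bridge is that Theorem~\ref{thm:geomcap} is stated for a cardinality-weighted (unit cost) covering instance with demands, and its guarantee is phrased as: any $\gamma$-approximation for the induced set cover instance $\P$ \emph{based on rounding the standard LP relaxation} yields a $9\gamma$-approximation for $\I$.

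Concretely, here are the steps. \textbf{Step 1:} Given the capacitated instance $\I$ on $n$ points with unit set costs, form the induced set cover instance $\P$ in $\R^{d+1}$ as in Theorem~\ref{thm:geomcap}; its sets are exactly the induced objects of the profiles $z$. Note $\P$ has unit weights since $\I$ does. \textbf{Step 2:} Observe that the \emph{dual} system of the sets of $\P$ is, by construction, the dual of the induced objects, so by hypothesis it has $\epsilon$-nets of size $(1/\epsilon)\Phi(1/\epsilon)$. \textbf{Step 3:} Apply the Br\"onnimann--Goodrich / Even et al.\ LP-based algorithm to $\P$: solve the standard set cover LP, and run the reweighting/net-finding procedure to obtain an integral cover of $\P$ of size $O(\Phi(\tau))\cdot \tau$, where $\tau$ is the LP optimum of $\P$ (and $\tau \le \mathrm{OPT}_\P$). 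This is a $\gamma$-approximation with $\gamma = O(\Phi(\mathrm{OPT}_\P))$ obtained by rounding the standard LP. \textbf{Step 4:} Feed this into Theorem~\ref{thm:geomcap} to get a $9\gamma = O(\Phi(\mathrm{OPT}_\P))$-approximation for $\I$. \textbf{Step 5:} Finally, relate $\mathrm{OPT}_\P$ back to $\mathrm{OPT}_\I$: the reduction of Theorem~\ref{thm:geomcap} is designed precisely so that a feasible solution to $\I$ maps to a feasible solution to $\P$ of comparable size (and conversely up to the $9$ factor), so $\mathrm{OPT}_\P = \Theta(\mathrm{OPT}_\I)$, hence $\Phi(\mathrm{OPT}_\P) = \Phi(O(\mathrm{OPT}_\I)) = O(\Phi(\mathrm{OPT}_\I))$ using that $\Phi$ is (without loss of generality) slowly growing. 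This gives the claimed $O(\Phi(\mathrm{OPT}))$ bound.

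The step needing the most care is \textbf{Step 5} together with the precise interface in \textbf{Step 3}: one must check that the guarantee ``$\gamma$-approximation based on rounding the standard LP'' in Theorem~\ref{thm:geomcap} is genuinely what the Br\"onnimann--Goodrich framework delivers — i.e.\ that the framework's output is bounded in terms of the \emph{LP value} of $\P$, not merely its integral optimum, and that the dependence is on $\Phi$ evaluated at (a quantity comparable to) $\mathrm{OPT}_\I$ rather than on $n$. The former is exactly the content of the LP-based refinement of Even et al.~\cite{EvenRS05}; the latter follows because the number of sets chosen in each round is governed by $\epsilon \sim 1/\mathrm{OPT}$. A minor technical point, easily handled, is that $\Phi$ should be assumed nondecreasing and at most polynomially growing so that $\Phi(O(t)) = O(\Phi(t))$; this is standard and holds for all $\Phi$ of interest (constants, $\log^* $, inverse Ackermann, etc.). With these observations in place the theorem follows by chaining the two reductions.
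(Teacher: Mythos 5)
Your proof is correct and takes essentially the same route the paper intends: Theorem~\ref{thm:geomcap} reduces $\I$ to the induced unit-cost set cover instance $\P$, to which the LP-based Br\"onnimann--Goodrich / Even et al.\ guarantee (Theorem~\ref{thm:epsnet}) is applied, and the paper states this composition without further detail. Your Steps 3 and 5 correctly identify the only interface points needing verification (that the guarantee is measured against the LP value of $\P$, which the reduction bounds by $O(1)\cdot\mathrm{OPT}_{\I}$, and that $\Phi$ is monotone with $\Phi(O(t))=O(\Phi(t))$); the one point you leave implicit is that $\P$ is actually a multi-cover instance with requirements $m_{p,j}\geq 1$, for which the paper asserts in Section~\ref{sec:prel} that the same $\epsilon$-net framework applies.
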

There is a rich history of results proving near linear union complexity for many natural geometric set systems including half-spaces in $\R^3$ and $\R^2$, fat objects in $\R^2$~\cite{AronovBES14}, pseudo-disks in the plane with few intersections~\cite{KedemLPS86} and so on, leading to obtain better guarantees for a variety of covering problems.
 Similarly, many natural geometric set systems admit small $\epsilon$-nets. 
 In general, as any set system of VC-dimension $d$ admits $\epsilon$-nets of size $O((d/\epsilon)\log(1/\epsilon))$~\cite{HausslerW87,BlumerEHW89}, this immediately gives a $O(d\log OPT)$ approximation for such systems. More importantly, much smaller $\epsilon$-nets are known for many natural geometric set systems such as planar disks and pseudo-disks~\cite{PyrgaR08}, axis-parallel rectangles and fat triangles in $\R^2$~\cite{AronovES10}, and axis-parallel boxes, axis-aligned octants and unit cubes in $\R^3$ ~\cite{AronovES10,BoissonnatSTY98}; giving improved guarantees for such systems for covering problems~\cite{BronnimannG95,EvenRS05,clarksonV07,InamdarV18}.

\noindent{\bf Some consequences.} 
Even though Theorems \ref{thm:geomcapeasy} and \ref{thm:geomcapeasy2} follow easily from Theorem \ref{thm:geomcap}, some of the consequences are quite surprising. Below,
in Table~\ref{tab:my_label} we list some of these  non-trivial consequences.
Some of these corollaries also use some additional properties satisfied by the reduction in Theorem \ref{thm:geomcap}, such as that we can split an object into $O(1)$ objects to simplify the geometry while losing only an $O(1)$ approximation. In some cases, such as $s$-piece-wise linear profiles, we can even split an object into $s$ pieces while only losing an $O(\log s)$ factor, due to an interesting property of union complexity \cite{InamdarC19}. Our approach also subsumes the framework of Chakrabarty et al.~\cite{ChakrabartyGK10} for uniform capacity profiles and just as in~\cite{ChakrabartyGK10}, the slack obtained from KC inequalities can be used to round the capacities to integer powers of 2. We discuss these consequences in more detail in Section \ref{sec:appl}.

\begin{table}[hbtp!]
    \centering
    \begin{tabular}{||c|c|c||}
        \hline
         Type of capacity profile
         & Universe
         & Approximation ratio
         \\
         \hline\hline
         Piece-wise linear functions with $s$ pieces
         & $\R$
         & $O(\log s)$
         \\
         \hline
         Polynomial curves of degree at most $s\geq 2$
         & $\R$
         &
         \vtop{\hbox{\strut $O((\alpha(n))^{k-1}) \text{ for } s=2k$}\hbox{\strut $O((\alpha(n))^{k-1}\log \alpha(n))\text{ for } s = 2k+1$}}
         \\
         \hline
         Linear functions
         & $\R^2$
         & $O(1)$\\
         \hline
    \end{tabular}
    \caption{Some applications of our result, here $\alpha(n)$ is the inverse of Ackermann's function}
    \label{tab:my_label}
\end{table}

In general, given the ubiquity of capacitated covering problems in resource allocation and scheduling, and extensive work on geometric problems, we expect that our connection will have more applications.

\subsection{Preliminaries}
\label{sec:prel}
We now describe some notations  and basic results that we will need. 

A crucial distinction will be between set-cover and {\em capacitated} set cover. By set cover we will always mean an instance where the capacity of every set $z$ is exactly $1$.
An element $p$ has some covering requirement $m_p \geq 1$, which means that it should be covered with at least $m_p$ distinct sets to be satisfied (strictly speaking, this should be called multi-cover, but we use set cover to avoid clutter with too many variants of set cover, and as far as we know all results that hold for set cover also hold for multi-cover.)

In a capacitated set cover problem, the elements have arbitrary demands $d_p$ and sets have capacity profile (or function) $c_z(\cdot)$. We call the problem {\em non-uniform} if $c_z(p)$ for $p\in z$ can depend on $p$, and {\em uniform} if $c_z(p)=c_z$ for all $p \in z$.  To make the distinction even clearer, note that we use the notation $d_p$ and $m_p$ to distinguish the demands of points in capacitated and uncapacitated versions.

\noindent {\bf Knapsack-cover (KC) inequalities.}
Capacities can make a covering problem much harder. Already for a single element, we get the NP-hard Knapsack Cover problem (given items with cost $w_i$ and size $c_i$, find a minimum cost subset of items to cover a knapsack of size $d$). The following natural LP relaxation for it 
\[ \min \sum_i x_i w_i \qquad \text{s.t.} \quad  \sum_i c_i x_i \geq d \qquad    x_i \in [0,1], \quad \forall i \in [n]\]
 has an arbitrarily large integrality gap.
The following stronger LP based on exponentially many KC inequalities, was introduced by Carr et al.,~\cite{CarrFLP00} and they showed that this reduces  the integrality gap to $2$.
\[ \min \sum_i x_i w_i \quad \text{s.t. } \quad  \sum_{ i \notin S} \min( p_i, d-p(S))\,x_i   \geq   d - p(S)  \quad \forall S \subset [n],\,p(S) \leq d \quad    x_i \in [0,1], \quad \forall i \in [n]\]
where  $p(S) = \sum_{i \in S} p_i$.
Roughly, the inequalities say that even if all the items in $S$ are chosen, 
the residual demand of $d-p(S)$ must still be covered by items not in $S$. Even though exponentially large, the LP can be solved to any accuracy in polynomial time. These inequalities have been very
useful for various capacitated covering problems.  An alternate perspective using primal-dual and local ratio methods is in \cite{CarnesS15, BarNoyBFNS01}.

\vspace{-3mm}

\paragraph{Geometric set cover.} We will be interested in settings where the sets correspond to geometric objects in $\R^d$, in fixed dimension $d$, such as rectangles, triangles, boxes and so on. 
Given a collection $X$ of such geometric objects, the {\em union complexity} of $X$ is the number of faces of all dimensions (vertices, edges and so on) on the boundary of the object formed by the union of all objects in $X$. We say that a geometric set system has union complexity function $\phi(\cdot)$, if for every $t$, every collection of $t$ sets has union complexity at most $t \phi(t)$.

In a breakthrough result \cite{Varadarajan10}, Varadarajan developed a powerful quasi-uniform sampling technique to give improved bounds for geometric set cover problems with low union complexity. This was extended to more general shallow-cell complexity \cite{ChanGKS12}\footnote{We do not discuss this here to keep the exposition simple, but Theorem \ref{thm:geomcapeasy} also works directly in this setting.} and to multi-cover~\cite{BansalP16}. These results give the following.
\begin{theorem}%
\label{thm:unionmulticover}
There is a polynomial time LP-based $O(\log \phi(n))$ approximation for instances of geometric  set cover on $n$ points, where the set system has union complexity function $\phi(\cdot)$.
\end{theorem}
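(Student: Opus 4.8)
The statement is exactly what the quasi-uniform sampling line of work delivers --- \cite{Varadarajan10} for weighted set cover with low union complexity, \cite{ChanGKS12} for the sharper shallow-cell-complexity version, and \cite{BansalP16} for the extension to multi-cover with requirements $m_p\ge 1$ --- so the plan is to assemble these ingredients. Here is the route I would take.

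\emph{LP and the unit-weight warm-up.} Write the natural covering LP: a variable $x_z\in[0,1]$ for each object $z$, a constraint $\sum_{z\ni p} x_z\ge m_p$ for each point $p$, and objective $\min\sum_z w_z x_z$; let $x^\ast$ be an optimal solution and $\mathrm{OPT}$ its value. This LP has $n$ constraints and is solved exactly in polynomial time. In the unit-weight, $m_p\equiv1$ case, rounding $x^\ast$ is equivalent to finding a small hitting set in the dual range space whose ground set is the object collection $Z$ and in which each point $p$ gives the range $R_p=\{z: p\in z\}$. Low union complexity controls this range space: by the Clarkson--Shor random-sampling argument (Clarkson--Varadarajan), a family of objects whose every $t$-subset has union complexity $O(t\phi(t))$ has $\epsilon$-nets of size $O((1/\epsilon)\log\phi(1/\epsilon))$ for the associated (shallow) dual ranges. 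Taking a weighted $\epsilon$-net with density proportional to $x^\ast$ and $\epsilon=1/\mathrm{OPT}$, after first scaling $x^\ast$ up by an $O(\log n)$ factor, covers every point with high probability at cost $O(\log\phi(n))\cdot\mathrm{OPT}$.

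\emph{Quasi-uniform sampling for general weights.} The obstacle with arbitrary weights is that a single cell of the arrangement of the chosen objects may be fractionally covered by $x^\ast$ using only sets of wildly different weights, so independent sampling with marginals $\propto x^\ast$ cannot simultaneously control cost and coverage. Following Varadarajan and Chan--Grant--K\"onemann--Sharpe, bucket the objects by weight into classes $W_j=\{z: w_z\in[2^j,2^{j+1})\}$, and within each class use a quasi-uniform (negatively correlated) sample that keeps $z\in W_j$ with marginal $\min(1,\beta x^\ast_z)$ for a boosting factor $\beta=\Theta(\log\phi(n))$, but whose joint law inside $W_j$ mimics a uniform subset of the appropriate size. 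Since the arrangement of the induced objects has few low-depth cells --- exactly the quantity bounded by union/shallow-cell complexity --- an $\epsilon$-net-type estimate bounds, per class $W_j$, the expected number of points left uncovered beyond what $x^\ast$ restricted to $W_j$ ``paid for'', and the boost $\beta$ drives the per-cell failure probability low enough to union-bound over cells. A cheap fix-up --- for each still-uncovered point add one unused object from the appropriate weight class --- costs, by a per-class charging argument, $O(1)$ times the LP mass of that class, hence $O(\mathrm{OPT})$ in total; thus one round costs $O(\log\phi(n))\cdot\mathrm{OPT}$ and shrinks the residual requirements by a constant factor w.h.p.

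\emph{Multi-cover, iteration, and the main difficulty.} To handle $m_p\ge1$ and to amplify the success probability, iterate the sampling-plus-fix-up for $O(\log(n+\max_p m_p))=O(\log n)$ rounds against the residual requirements $m_p^{(t)}$ and a correspondingly rescaled fractional solution, exactly as in \cite{BansalP16}; the geometric decrease makes the residuals vanish and the costs telescope to $O(\log\phi(n))\cdot\mathrm{OPT}$. Everything runs in polynomial time: the LP is poly-size, and a net of the stated size is efficiently constructible (deterministically, or by resampling). The genuinely new ingredient over classical $\epsilon$-net rounding --- and the step I expect to be the crux --- is the weighted fix-up analysis: making the per-weight-class charging go through requires the quasi-uniform rather than independent sampling together with the correct boosting factor, and it is here that the union-complexity hypothesis is used at full strength, funneled through the shallow-cell / $\epsilon$-net bound within each weight class.
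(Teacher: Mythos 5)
The paper gives no proof of this theorem at all: it is stated as a black-box import of the quasi-uniform sampling results of \cite{Varadarajan10,ChanGKS12} together with the multi-cover extension of \cite{BansalP16}, which are exactly the three ingredients you identify and sketch. Your outline is consistent with that approach (modulo minor inaccuracies in the warm-up, e.g.\ the unweighted case needs only a weighted $\epsilon$-net at scale $\epsilon = 1/\mathrm{OPT}$, not an extra $O(\log n)$ boosting), so it matches the paper's treatment.
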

For several natural geometric objects in $2$ and $3$ dimensions (axis-parallel rectangles, fat triangles, disks etc.), $\phi(t)$ is typically $O(1)$ or $O(\log t)$~\cite{Varadarajan10}, leading to better $O(1)$ or $O(\log \log n)$ approximations.

\noindent {\bf{$\epsilon$-nets}} defined below have been very useful for understanding the complexity of geometric set systems. 
\begin{definition}\label{def:epsnet}
For a set system $X$, $Y \subseteq X$ is said to be an \emph{$\epsilon$-net} if $Y$ covers all elements which are covered by at least an $\epsilon$-fraction of the sets in $X$ i.e. if $e \in Y$ for all elements $e$ satisfying $|\{X:e\in X\}| \geq \epsilon |X|$. 
\end{definition}
Bronnimann and Goodrich~\cite{BronnimannG95} showed (theorem~\ref{thm:epsnet}) that a set system $(U,F)$ admits good approximations for unweighted set cover if the dual system has small $\epsilon$-nets. Here the dual system to $(U,F)$ is the system $(U^*,F^*)$ where $U^* = F$ and $F^*$ consists of the sets $\{S \in F: e \in S\}$ for each $e \in U$.

\begin{theorem}\label{thm:epsnet}
Suppose for a set system $(U,F)$, an $\epsilon$-net of size $(1/\epsilon) \Phi(1/\epsilon)$ can be found for the dual system in polynomial time for any $\epsilon>0$\footnote{Techincally, this requires an $\epsilon$-net with respect to weights of the sets, but this is not an issue in natural applications.}. Then the unweighted set cover problem for $(U,F)$ admits an LP-based $O(\Phi(OPT))$ approximation.
\end{theorem}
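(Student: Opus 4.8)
The plan is to carry out the standard LP-based net-rounding argument of Even et al.~\cite{EvenRS05}, which converts a net finder for the dual system directly into a rounding scheme for the set cover LP, thereby avoiding any guessing of $OPT$. First I would write the standard LP relaxation for set cover on $(U,F)$: minimize $\sum_{S \in F} x_S$ subject to $\sum_{S \ni e} x_S \ge 1$ for every $e \in U$, and $x_S \ge 0$. Since $F$ is given explicitly this LP has polynomial size and can be solved exactly; let $x^*$ be an optimal solution and $W \defeq \sum_{S \in F} x^*_S = OPT_f \le OPT$ its value. We may assume some element must be covered, so $W \ge 1$ and $\epsilon_0 \defeq 1/W \in (0,1]$ is a valid choice of net parameter.

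The crux is to reinterpret $x^*$ as a weight assignment on the \emph{elements} of the dual system $(U^*,F^*)$, where $U^* = F$ and $F^* = \{F^*_e : e \in U\}$ with $F^*_e = \{S \in F : e \in S\}$. Under these weights the total weight of $U^*$ is exactly $W$, and the LP constraint for $e$ says precisely that the dual set $F^*_e$ has weight $\sum_{S \ni e} x^*_S \ge 1 = \epsilon_0 W$, i.e.\ it is ``$\epsilon_0$-heavy''. Hence any $\epsilon_0$-net $Y \subseteq F$ for the dual system (with respect to the weights $x^*$) must hit every $F^*_e$, which means exactly that $Y$ contains, for every $e \in U$, some set $S$ with $e \in S$ --- so $Y$ is a feasible set cover of $(U,F)$. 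By hypothesis such a $Y$ is computable in polynomial time and has size at most $(1/\epsilon_0)\,\Phi(1/\epsilon_0) = W\,\Phi(W)$.

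To finish, I would combine the bounds: since $\Phi$ is non-decreasing and $W = OPT_f \le OPT$, the returned cover has size at most $W\,\Phi(W) \le OPT \cdot \Phi(OPT)$, which is an $O(\Phi(OPT))$ approximation; and since $\epsilon_0 = 1/OPT_f$ is read off directly from the LP solution rather than searched for, the algorithm is genuinely LP-based, in contrast to the original multiplicative-weights procedure of Br\"{o}nnimann and Goodrich~\cite{BronnimannG95} which must hunt for the right scale. The only delicate point --- and the one flagged in the footnote to the theorem --- is that $Y$ must be an $\epsilon$-net with respect to the fractional weights $x^*_S$, not merely with respect to the cardinality $|U^*| = |F|$; the reduction is otherwise purely syntactic. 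I expect this weighted-net requirement to be the main (minor) obstacle, but it is non-issue in all the geometric systems of interest here (planar disks and pseudo-disks, fat triangles, axis-parallel boxes, octants, unit cubes, etc.), for which weighted $\epsilon$-net constructions of the stated size are known.
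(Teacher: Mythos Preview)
The paper does not actually prove this theorem: it is quoted in the preliminaries as a known result attributed to Br\"{o}nnimann and Goodrich~\cite{BronnimannG95} (with the LP-based variant credited to Even et al.~\cite{EvenRS05}), and no proof is supplied. Your proposal is the standard and correct Even--Rawitz--Shahar argument: interpret the LP optimum $x^*$ as weights on the ground set of the dual system, observe that LP feasibility makes every dual set $\epsilon_0$-heavy for $\epsilon_0=1/W$, and conclude that a weighted $\epsilon_0$-net for the dual is a feasible primal cover of size $W\,\Phi(W)\le OPT\cdot\Phi(OPT)$. The footnoted caveat about weighted nets is handled exactly as you describe. So your write-up is fine; there is simply no paper proof to compare it against.
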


\subsection{Organization and Overview}
The rest of the paper is organized as follows.
In Section~\ref{sec:gsp}, we consider the GSP problem and show that it reduces to the TRC problem.
In Section \ref{sec:geomcap}, we give the reduction from non-uniform capacitated geometric set cover to geometric set cover, and prove our main result Theorem~\ref{thm:geomcap}.
In Section~\ref{subsec:trc}, we show how Theorem~\ref{thm:trc} for TRC follows from Theorem \ref{thm:geomcapeasy}, 
and in Section~\ref{sec:appl} we give other applications of our results beyond TRC, and describe some additional useful properties of the reduction in Theorem \ref{thm:geomcap}.

\section{Geometric view of GSP}\label{sec:gsp}
We now consider the GSP problem. The jobs are indexed $1,\ldots,n$, and all are released at time $0$.
Without loss of generality, we assume that all job sizes $p_j$ are integers, and time is slotted, referring to the time interval $(i-1,i]$ as time $i$. As $p_j$ are integers, we can assume that at most one job executes at any slot on any machine, and that the functions $f_j$ are defined on non-negative integers.

As the objective $\sum_j f_j(c_j)$ only depends on the completion times $c_j$, solving GSP is equivalent to finding \emph{deadlines} $c_j$, so that each job $j$ can be feasibly scheduled by $c_j$, and $\sum_j f_j(c_j)$ is minimized.
By standard preprocessing, see e.g.~\cite{BansalP14}, losing at most factor $2$ in the objective, we can assume that 
$f_j(x)$ is piece-wise constant and takes at most $O(\log n)$ values\footnote{Roughly, we can round $f_j(x)$ to powers of $2$ or down to $0$ if $f_j(x) \leq \text{Opt}/n$, where $\text{Opt}$ is some guessed upper bound on the optimum value.}). 
So for each $j$, it suffices to consider $k=O(\log n)$ candidate deadlines $c_{j,0},\ldots,c_{j,k}$. Here, $c_{j,0}$ is the latest time until which $f_j(x)=0$.

Let $v = \sum_j p_j$.
A key result of Moseley was the following characterization of feasible deadlines.
\begin{theorem}[\cite{Moseley19}]
\label{lem:moseley}
Given a set of deadlines $c_j$, a feasible schedule exists iff the following holds: 
\beq 
\label{eq:cond} \sum_{j \in [n]} \min(p_j,\max(c_j-b,0)) \geq \sum_{j \in [n]}  p_j - mb  \qquad \text{for all } b=0,1,\ldots,v.
\eeq
\end{theorem}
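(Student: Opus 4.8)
The plan is to prove the characterization by a standard max-flow/min-cut argument, modeling a preemptive, migratory schedule meeting deadlines as an integral flow in a transportation network, and then computing the minimum cut explicitly.

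First I would build a network $N$ with a source $s$, a sink $\tau$, one node per job $j\in[n]$, and one node per time slot $t\in\{1,\dots,T\}$, where $T=\max_j c_j$ (slots past the last deadline are useless). Add arcs $s\to j$ of capacity $p_j$; arcs $t\to\tau$ of capacity $m$ (at most $m$ machines are busy in a slot); and arcs $j\to t$ of capacity $1$ for every slot $t\le c_j$ (a job runs on at most one machine per slot, and never past its deadline). Since job sizes are integral and time is slotted, a feasible schedule is exactly an integral $s$–$\tau$ flow saturating every arc $s\to j$: the flow on $j\to t$ is the amount of $j$ processed in slot $t$; conservation at $t$ together with its capacity $m$ forces at most $m$ jobs to run in slot $t$; and such an integral in-slot assignment can be realized on the $m$ identical machines with each selected job placed on a distinct machine. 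Hence a feasible schedule exists iff the maximum $s$–$\tau$ flow equals $v=\sum_j p_j$, and by integrality of capacities this is equivalent to the same statement for fractional flows.

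The heart of the argument is to evaluate the minimum cut. Fix a cut and let $S$ be the set of slot-nodes on the source side; then a job on the sink side contributes $p_j$ (its $s\to j$ arc), a job on the source side contributes $|\{t\le c_j: t\notin S\}|$ (its cut $j\to t$ arcs), and each slot in $S$ contributes $m$ (its $t\to\tau$ arc), so the minimum cut equals $\min_S\bigl(m|S|+\sum_j\min(p_j,\ |\{t\le c_j:t\notin S\}|)\bigr)$. I would then show that $S$ may be taken to be a prefix $\{1,\dots,b\}$: if $t\in S$ but some $t'<t$ lies outside $S$, swapping $t'$ in and $t$ out leaves $m|S|$ unchanged and, by a one-line case check on whether $t\le c_j$, never increases any term $\min(p_j,|\{t''\le c_j:t''\notin S\}|)$. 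With $S=\{1,\dots,b\}$ we have $|\{t\le c_j:t\notin S\}|=\max(c_j-b,0)$, so the minimum cut equals $\min_{0\le b\le T}\bigl(mb+\sum_j\min(p_j,\max(c_j-b,0))\bigr)$.

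By max-flow/min-cut, a feasible schedule therefore exists iff this quantity is at least $v$, i.e.\ iff $\sum_j\min(p_j,\max(c_j-b,0))\ge\sum_j p_j-mb$ for all $b$, which is exactly \eqref{eq:cond}; and it suffices to range $b$ over $\{0,\dots,v\}$ since once $b\ge v/m$ the inequality is trivial from $mb\ge v$ and $\lceil v/m\rceil\le v$. I expect the only steps needing care to be the prefix-exchange lemma and the bookkeeping that an integral flow genuinely encodes a legal $m$-machine schedule; everything else is routine. As an independent sanity check on the cut computation, note that necessity of \eqref{eq:cond} also reads off directly: at most $mb$ units of work can be completed in the first $b$ slots, so at least $v-mb$ units must run afterwards, and job $j$ can supply at most $\min(p_j,\max(c_j-b,0))$ of that.
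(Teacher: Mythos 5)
Your proposal is correct and follows essentially the same route as the paper's Appendix~\ref{subsec:flow}: the identical three-layer flow network, the prefix-exchange argument showing the slot side of a min-cut may be taken to be $\{1,\dots,b\}$, and the explicit evaluation of the cut as $mb+\sum_j\min(p_j,\max(c_j-b,0))$. The only cosmetic difference is that you minimize over the job side before the prefix exchange while the paper does it after, which changes nothing.
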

This result is based on considering a flow network to find a feasible schedule, and observing that any min-cut must have a specific form. For completeness, we give a proof in the Appendix~\ref{subsec:flow}.

\vspace{-3mm}
\paragraph{The geometric reduction.}
We show that the feasibility condition \eqref{eq:cond} has a clean geometric view. 
We first define a wedge.
For parameters $p,c$, let a wedge $u_{p,c}$ be the function  (see Figure \ref{fig:wedge}). 
\[u_{p,c}(t) = \min(p, c-t) \text{ for $ t \leq c$, and $0$ otherwise.}\]
In other words, $u_{p,c}$ is decreasing at slope $1$ from $t=\min(c-p,0)$ to $t=c$, and is constant with value $\min(p,c)$ from $t=0$ to $t=\min(c-p,0)$. %

A key observation is that \eqref{eq:cond} can be viewed equivalently as follows.
\begin{observation}
Given a candidate deadline $c_j$ for each job $j$, consider the wedge $u_{p_j,c_j}$. 
Consider each point $b=0,1,\ldots,v$, with demand  $d_b = \sum_j p_j - mb$. 
Then the collection of deadlines $c_j$ are feasible iff the wedges satisfy all the demands, i.e.~$\sum_j u_{p_j,c_j}(b) \geq  d_b$ for all  $b$.
\end{observation}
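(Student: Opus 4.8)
The plan is to show that the left-hand side of \eqref{eq:cond}, namely $\sum_{j} \min(p_j, \max(c_j - b, 0))$, is exactly $\sum_j u_{p_j, c_j}(b)$ for every integer $b \in \{0, 1, \ldots, v\}$, and that the right-hand side $\sum_j p_j - mb$ is exactly the demand $d_b$. Once this term-by-term identification is established, the equivalence is immediate: \eqref{eq:cond} says ``$\sum_j \min(p_j, \max(c_j-b,0)) \geq \sum_j p_j - mb$ for all $b$'', and the wedge condition says ``$\sum_j u_{p_j,c_j}(b) \geq d_b$ for all $b$'', and these are literally the same family of inequalities. So by Theorem~\ref{lem:moseley}, a feasible schedule exists iff the wedges satisfy all the demands.

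The only real content is checking the pointwise identity $u_{p,c}(b) = \min(p, \max(c-b, 0))$ for $b \geq 0$, which I would verify by splitting into the three regimes in the definition of the wedge. For $b > c$: the wedge is $0$ by definition, and $\max(c-b,0) = 0$, so $\min(p, 0) = 0$ as well (using $p \geq 0$). For $\max(c-p, 0) \leq b \leq c$ (the region where $u_{p,c}$ decreases at slope $1$): here $u_{p,c}(b) = c - b$, and since $c - b \leq p$ in this range we have $\min(p, c-b) = c-b$; also $c - b \geq 0$ so $\max(c-b,0) = c-b$, matching. For $0 \leq b < \max(c-p,0)$ (the flat part, which is nonempty only when $c > p$): here $u_{p,c}(b) = \min(p,c) = p$, and $c - b > c - (c-p) = p \geq 0$, so $\max(c-b, 0) = c-b$ and $\min(p, c-b) = p$, matching. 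This exhausts all cases for $b \geq 0$. The equality $d_b = \sum_j p_j - mb$ is just the definition given in the statement, so there is nothing further to check there.

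I do not anticipate a genuine obstacle here; this is a routine unwinding of definitions, and the ``proof'' is essentially the observation that Moseley's condition \eqref{eq:cond} has already been written in a form that is term-by-term a sum of wedge evaluations. The only mild care needed is the case analysis above and keeping track of the non-negativity assumptions $p_j \geq 0$ and $b \geq 0$; one should also note that restricting $b$ to integers in $\{0,\ldots,v\}$ is harmless since this is exactly the range of $b$ appearing in Theorem~\ref{lem:moseley} (and the wedges are piecewise linear, so the integer constraints would in any case imply the real-valued ones over $[0,v]$, though we do not need that here).
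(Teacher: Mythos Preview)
Your proposal is correct and matches the paper's treatment: the paper does not give a separate proof of this observation at all, since the wedge $u_{p,c}$ is defined precisely so that $u_{p,c}(t)=\min(p,\max(c-t,0))$ for $t\ge 0$, making the observation an immediate restatement of condition~\eqref{eq:cond} in Theorem~\ref{lem:moseley}. Your case analysis simply spells this out explicitly, which is fine and entirely in the spirit of the paper's (implicit) argument.
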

As GSP is equivalent to finding feasible $c_j$'s with minimum total cost,  GSP reduces (without any loss in objective) to the following geometric {\em wedge-cover problem}:
For each job $j$, and every possible deadline $c$ for $j$, there is a wedge $u_{p_j,c}$ of cost $f_j(c)$. 
Choose \emph{exactly one} wedge for each job $j$ so that the demand $d_b$ for each $b$ is satisfied, and the total cost of the chosen wedges is minimized.
This is almost a covering problem, except that exactly {\em one} wedge must be picked for a job, which is a packing condition,. Another issue is that as $v=O(nP)$, there are potentially exponentially many points $b$ to be covered. Both these issues are easily fixed as we show next.
\begin{figure}
    \centering
    \includegraphics[width=\textwidth]{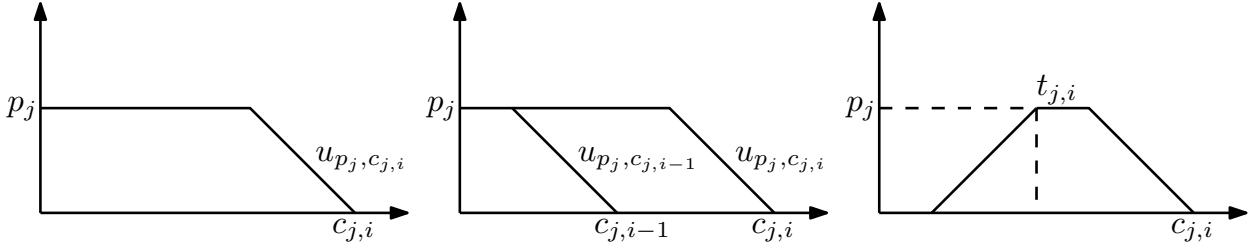}
    \caption{1. A single wedge $W_{j,i}\quad$ 2. Splitting into $W_{j,i} - W_{j,i-1}\quad$ 3. Trapezoid $T_{j,i}$}
    \label{fig:wedge}
\end{figure}

\noindent {\bf Getting a purely covering problem.}
By the preprocessing for $f_j$'s described earlier, each $j$ has only $k=O(\log n)$ possible deadlines $c_{j,0},\ldots,c_{j,k}$, with geometrically increasing costs $f_{j}(c_{j,i}) \geq 2 f_j(c_{j,i-1})$.

Let $W_{j,i}$ denote the wedge $u_{p_j,c_{j,i}}$ for job $j$ with deadline $c_{j,i}$.
For each $i \in [k]$, let us define $T_{j,i} = W_{j,i}- W_{j,i-1}$ and
assign $T_{i,j}$ cost $f_j(c_{j,i})$.
For $i=0$, set $T_{j,0} = W_{j,0}$ and assign it cost $f_j(c_{j,0})=0$.
 Note that $T_{j,i}$ has
a ``trapezoid" profile given by $t_{j,i} =u_{p_j,c_{j,i}} - u_{p_j,c_{j,i-1}}$
(see Figure \ref{fig:wedge}),
with a rectangular part (possibly empty), and up to two triangular parts, either rising or falling at slope $1$.  
Moreover, for any point $z$,  the capacity of $W_{j,i}$ at $z$ is exactly equal to the total capacity of $T_{j,0},\ldots,T_{j,i}$, i.e.~$u_{p_j,c_{j,i}}(z) = \sum_{g=0}^i t_{j,g}(z).$

\noindent{\bf Trapezoid-cover problem.} Given a GSP instance, first preprocess it and create $T_{j,i}$ as defined above. Consider the {\em trapezoid-cover problem} of finding a minimum cost collection of $T_{j,i}$ (possibly choosing several for each job $j$) so that the demand $d_b = \sum_j p_j - mb$ for each $b$ is satisfied. 

Given a GSP instance $I$, let $T$ denote the corresponding trapezoid cover instance obtained by the reduction above. We have the following simple relation between the value of their optimum solutions.
\begin{lemma}
\label{l:cost}  $OPT(I) \leq OPT(T) \leq 4 OPT(I)$.  
\end{lemma}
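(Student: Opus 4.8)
The plan is to prove the two inequalities separately, exploiting the fact that the trapezoids $T_{j,i}$ are just successive differences of the wedges $W_{j,i}$, so picking a consecutive prefix of trapezoids for a job reconstructs a wedge.

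\textbf{Upper bound $OPT(T) \le 4\,OPT(I)$.} First I would take an optimal solution to the (preprocessed) GSP instance, which after the preprocessing for the $f_j$'s costs at most $2\,OPT(I)$ and assigns to each job $j$ a deadline $c_{j,i(j)}$ from its candidate list. Translate this into the wedge-cover solution consisting of the wedges $W_{j,i(j)}$, which is feasible for the demands $d_b$ by Moseley's characterization / the Observation, and has the same cost. Now convert each chosen wedge $W_{j,i(j)}$ into the collection of trapezoids $T_{j,0},T_{j,1},\ldots,T_{j,i(j)}$. By the telescoping identity $u_{p_j,c_{j,i}}(z)=\sum_{g=0}^{i} t_{j,g}(z)$, the total capacity contributed at every point $b$ is unchanged, so this trapezoid collection is feasible. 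The only loss is in the cost: the prefix of trapezoids costs $\sum_{g=0}^{i(j)} f_j(c_{j,g})$, and since the candidate costs grow geometrically ($f_j(c_{j,g})\ge 2 f_j(c_{j,g-1})$), this geometric sum is at most $2 f_j(c_{j,i(j)})$. Hence the trapezoid solution costs at most twice the wedge solution, i.e.\ at most $2\cdot 2\,OPT(I)=4\,OPT(I)$.

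\textbf{Lower bound $OPT(I) \le OPT(T)$.} Conversely, take an optimal trapezoid-cover solution $\mathcal{T}^*$. For each job $j$, let $i(j)$ be the largest index $i$ such that $T_{j,i}\in\mathcal{T}^*$ (and $i(j)=0$ if no trapezoid of $j$, beyond the free $T_{j,0}$, is chosen, noting $T_{j,0}$ has cost $0$ so we may always assume it is present). Assign job $j$ the deadline $c_{j,i(j)}$. Since every trapezoid $T_{j,i}$ has nonnegative profile $t_{j,i}$, throwing away the chosen trapezoids of index larger than... wait — there are none by definition of $i(j)$; but there may be ``gaps'' (some $T_{j,g}$ with $g<i(j)$ not chosen). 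Here I use that the wedge dominates any sub-collection: $u_{p_j,c_{j,i(j)}}(z)=\sum_{g=0}^{i(j)} t_{j,g}(z) \ge \sum_{g : T_{j,g}\in\mathcal{T}^*} t_{j,g}(z)$, since all $t_{j,g}\ge 0$. Thus replacing, for each job, its chosen trapezoids by the single wedge $W_{j,i(j)}$ only increases the capacity at every point $b$, so the demands $d_b$ remain satisfied; by the Observation these deadlines are GSP-feasible. The cost of the wedge assigned to $j$ is $f_j(c_{j,i(j)})$, which is exactly the cost of the single trapezoid $T_{j,i(j)}\in\mathcal{T}^*$, so $\sum_j f_j(c_{j,i(j)}) \le \mathrm{cost}(\mathcal{T}^*)=OPT(T)$. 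Finally, a GSP-feasible set of deadlines of cost $X$ on the preprocessed instance corresponds to an actual schedule on the original instance of cost at most $X$ (preprocessing only decreased the $f_j$'s), so $OPT(I)\le OPT(T)$.

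\textbf{Main obstacle.} The only subtle point is bookkeeping around the preprocessing step: the factor $2$ from rounding the $f_j$'s to $O(\log n)$ geometrically-spaced values must be accounted for in the upper bound but \emph{not} double-counted in the lower bound (where we go from preprocessed costs back to original costs, which only helps). I would state explicitly that $OPT(I)$ refers to the original instance while the trapezoids are built on the preprocessed one, and that the relevant monotonicity (preprocessed $f_j \le$ original $f_j$ up to the factor $2$, or rounded down to $0$) gives the lower bound directly and contributes the extra factor $2$ in the upper bound on top of the geometric-series factor $2$. Everything else is the telescoping identity and nonnegativity of the trapezoid profiles, which are immediate from the construction in Figure~\ref{fig:wedge}.
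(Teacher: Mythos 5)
Your proof is correct and follows essentially the same route as the paper's: the upper bound via the telescoping identity and the geometric cost sum (factor $2$ on top of the factor $2$ from preprocessing), and the lower bound by replacing each job's chosen trapezoids with the single wedge $W_{j,i(j)}$, using nonnegativity of the $t_{j,g}$ to handle gaps. Your explicit treatment of the preprocessing bookkeeping and of the gap/domination issue is slightly more careful than the paper's, but the argument is the same.
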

\begin{proof} Consider some feasible solution to $I$ with value $w$. After preprocessing (rounding the $f_j(c)$), this solution, and hence the corresponding wedge-cover instance has a solution of value at most $2w$. If this solution picks the wedge $W_{j,i}$ for job $j$, pick trapezoids $T_{j,g}$ for $g=0,\ldots,i$ in the trapezoid-cover instance. This gives a feasible cover, and as $t_{j,g}$ have geometrically increasing costs, the total cost of $t_{j,g}$ is at most twice that of $W_{j,i}$. Hence $OPT(T)\leq 4w$.

Conversely, given a feasible solution to $T$ of cost $w$, for each job $j$ let $i(j)$ be the highest index such that $T_{j,i(j)}$ is in the cover. If no trapezoid for $j$ is chosen, set $i(j)=0$. Then choosing the wedge $W_{j,i(j)}$, for each $j$ gives a feasible wedge-cover with cost at most $w$, and hence a feasible solution to GSP.
\end{proof}

\noindent{\bf The TRC problem.} To obtain the problem of covering with rectangles and triangles, split each trapezoid $T_{j,i}$, into at most one rectangle and two triangles, each with the same cost as that of $T_{j,i}$. This loses at most another factor $3$ in the approximation. 
Next a simple idea, whose proof is in Appendix \ref{subsec:flow}, ensures that the TRC instance has size polynomial in $n$ and the bit-complexity of $f_j$ and $p_j$.
\begin{lemma} The number of points $b$ in the TRC instance can be reduced to $O(n \log n)$. 
\end{lemma}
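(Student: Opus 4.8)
The plan is to reduce the number of demand points from $v = O(nP)$ to $O(n\log n)$ by observing that the demand constraint $\sum_j u_{p_j,c_j}(b) \geq d_b$ is only ``active'' or ``binding-relevant'' at a small number of breakpoints. First note that the right-hand side $d_b = \sum_j p_j - mb$ is a fixed linear function of $b$ with slope $-m$, so it is determined once we know its value at any single point. On the left-hand side, each candidate wedge $W_{j,i} = u_{p_j,c_{j,i}}$ is a piecewise-linear function of $b$ with at most three breakpoints (namely $b=0$, $b = c_{j,i}-p_{j,i}$ where it transitions from flat to sloped, and $b = c_{j,i}$ where it hits zero). Since there are $n$ jobs and $k = O(\log n)$ candidate deadlines per job, the set $B^\star$ of all such breakpoints, together with $b=0$, has size $O(n\log n)$.

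The key claim is: it suffices to impose the covering constraints only at the points of $B^\star$. I would argue this as follows. Fix any candidate solution (a choice of one wedge per job). The function $g(b) \defeq \sum_j u_{p_j,c_j}(b) - d_b$ is piecewise linear in $b$, with all breakpoints contained in $B^\star$; between consecutive breakpoints it is affine. An affine function on an interval attains its minimum at an endpoint of the interval. Hence $\min_{b \in [0,v]} g(b) = \min_{b \in B^\star \cap [0,v]} g(b)$, so $g(b) \geq 0$ for all integer $b \in \{0,\dots,v\}$ if and only if $g(b) \geq 0$ for all $b \in B^\star$ — we only need to be slightly careful that the relevant breakpoints are (or can be rounded to) integers, which holds since all $p_j$ and $c_{j,i}$ are integers. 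This equivalence holds simultaneously for every candidate solution, so the TRC (equivalently trapezoid-cover) instance with demand points restricted to $B^\star$ has exactly the same set of feasible solutions, hence the same optimum, as the original; and $|B^\star| = O(n\log n)$.

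Two small technical points to nail down. (i) When we pass from wedges/trapezoids to the split TRC instance (rectangles plus triangles), each $T_{j,i}$ decomposes into at most three pieces whose own breakpoints are again among the endpoints of $T_{j,i}$, so $B^\star$ as defined above already contains all breakpoints of all TRC objects; no new points are needed. (ii) The candidate deadlines $c_{j,i}$ themselves must be representable with bit-complexity polynomial in $n$ and the input size — this follows from the preprocessing step, since after rounding $f_j$ to $O(\log n)$ distinct values the relevant deadlines are among $O(\log n)$ explicitly computable thresholds per job, each an integer bounded by $v$, hence of polynomial bit-length. I expect the main (though still modest) obstacle to be this bookkeeping around bit-complexity and integrality of the breakpoints: the conceptual content — ``a piecewise-linear inequality need only be checked at its breakpoints'' — is immediate, but one must check that collapsing from $\{0,1,\dots,v\}$ to $B^\star$ does not silently reintroduce a dependence on $P$ through the coordinates of the surviving points, which it does not because each surviving coordinate is one of $O(n\log n)$ explicitly named integers, and we may renumber them $1,\dots,|B^\star|$ while carrying along their true values only inside the (now $O(1)$-description-size) capacity profiles.
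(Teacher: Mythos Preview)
Your proposal is correct and takes essentially the same approach as the paper: both reduce to the $O(n\log n)$ breakpoints of the piecewise-linear capacity profiles and use that an affine function on an interval is minimized at an endpoint, so checking the constraint at breakpoints suffices. The paper phrases this as partitioning the line into $O(|Z|)$ sub-intervals by the support endpoints $[a_z,b_z]$ of the TRC objects and keeping the two extreme points of each, while you phrase it as collecting the breakpoints $B^\star$ of the wedges directly---these are the same set of points, and the linearity/convex-combination argument is identical.
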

Together with Lemma \ref{l:cost}, this proves Theorem \ref{thm:redn}.

\section{Non-uniform capacitated cover}\label{sec:geomcap}
We now prove Theorem \ref{thm:geomcap}, which we restate here for convenience.

\begin{theorem}
\label{thm:geomcap2}
Consider an instance $\I$ of covering demands $d_p$ of points $p \in \R^{d}$ with a minimum cost collection of profiles (or sets) $z$ with capacities $c_z(p)$.
With each profile $z$, associate an induced object in $\R^{d+1}$ given by $\cup_p (p \times [0,c_z(p)])$. 

Then, there is an efficiently constructible set cover instance $\P$ in $\R^{d+1}$ with sets corresponding to the objects induced by the profiles $z$, that satisfies the following property: any $\gamma$-approximation for $\P$ based on rounding the standard LP relaxation, gives a $9\gamma$-approximation for $\I$.
\end{theorem}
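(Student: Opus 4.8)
The plan is to build the set-cover instance $\P$ by discretizing each induced object vertically, guided by the KC inequalities for $\I$, and then to show that fractional and integral solutions transfer between $\I$ and $\P$ with only constant loss. First I would write down the KC-strengthened LP for $\I$: for every point $p$ and every ``already-chosen'' subset $S$ of profiles covering $p$, the residual demand $d_p - c(S,p)$ (where $c(S,p)=\sum_{z\in S}c_z(p)$) must be met by the truncated capacities $\min(c_z(p),\,d_p-c(S,p))$ of the remaining profiles. By the Carr et al.~\cite{CarrFLP00} result (and its use in~\cite{ChakrabartyGK10}), this LP has an $O(1)$ integrality-gap-type slack that we will exploit. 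The key structural move is to place, for each point $p$, a \emph{discrete family of test points} $q_{p,1},q_{p,2},\dots$ stacked vertically above $p$ in $\R^{d+1}$ at geometrically spaced heights (e.g.\ heights that are the relevant thresholds $d_p - c(S,p)$ rounded to powers of two, or more simply a fixed geometric ladder $1,2,4,\dots$ up to $d_p$), and to give $q_{p,\ell}$ a covering requirement $m_{q_{p,\ell}}$ equal to a constant. A profile $z$'s induced object $\cup_{p'}(p'\times[0,c_z(p')])$ ``contains'' $q_{p,\ell}$ exactly when $c_z(p)$ is at least the height of $q_{p,\ell}$; this is precisely how the geometry encodes the capacity. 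So the set system of $\P$ is: elements $=$ the stacked test points, sets $=$ the induced objects, and multiplicities chosen so that covering every $q_{p,\ell}$ the required number of times forces enough total capacity at $p$.

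The next step is the two directions of the reduction. For the easy direction, a feasible (integral) cover $Z$ for $\I$ gives a feasible cover for $\P$: if $\sum_{z\in Z}c_z(p)\ge d_p$, then for each ladder height $h$ below $d_p$ the profiles whose capacity at $p$ is $\ge$ some dyadic fraction of $h$ must, by a pigeonhole/dyadic-bucketing argument, hit $q_{p,\ell}$ enough times — this is where the constant in $9\gamma$ starts accumulating. Conversely, given an LP solution $x$ for $\P$ of cost $C$, I would interpret the same $x$ as a fractional solution for the KC-LP of $\I$: the constraint that the fractional coverage of each $q_{p,\ell}$ is $\ge m_{q_{p,\ell}}$, summed/combined over the ladder with dyadic weights, implies each KC inequality at $p$ up to a constant factor, so $O(1)\cdot x$ is feasible for the KC-LP of $\I$. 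Then any $\gamma$-approximate LP-rounding for $\P$ produces an integral cover of $\P$ of cost $\le \gamma C$, which by the easy-direction correspondence (run in reverse, losing another constant) yields an integral cover of $\I$; bookkeeping the constants — the factor from dyadic bucketing of heights, the factor from the KC slack, and the factor relating $OPT(\P)$ to $OPT_{\mathrm{LP-KC}}(\I)$ — should be tunable to land at $9$.

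The main obstacle I anticipate is getting the quantitative ladder/bucketing argument to close with a clean constant rather than a $\log$ factor: naively, covering a demand $d_p$ by summing truncated capacities over dyadically-spaced heights loses $\Theta(\log(d_p))$ if done crudely, which is exactly the $\Omega(\log\log P)$-type loss the paper is trying to avoid. The resolution must be that we do \emph{not} need every height on the ladder to be independently ``filled'': the KC inequalities already encode, for each relevant residual level, a single constraint, so we only need one test point per \emph{distinct KC threshold} $d_p - c(S,p)$ at $p$, and the standard fact (from~\cite{ChakrabartyGK10, CarrFLP00}) is that it suffices to consider $O(1)$ such thresholds per scale after rounding capacities to powers of two — with the crucial point that a profile of capacity $c_z(p)$ contributes to only one scale's worth of these, so the multiplicities don't blow up. Making this precise — choosing the heights and multiplicities $m_{q_{p,\ell}}$ so that (i) feasibility for $\P$ $\Rightarrow$ feasibility for KC-LP of $\I$ up to $O(1)$, and (ii) the correspondence is exact enough that LP-based rounding is preserved — is the technical heart of the argument, and is where I'd spend most of the effort.
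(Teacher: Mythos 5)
Your high-level architecture matches the paper's: solve the KC-strengthened LP for $\I$, stack test points $q_{p,j}$ above each $p$ at dyadically spaced heights, let the induced objects in $\R^{d+1}$ be the sets of $\P$, and transfer solutions in both directions. You also correctly identify the central danger, namely that a naive per-height accounting loses a $\log$ factor. But your proposed resolution does not work, and the actual mechanism is missing. First, setting the covering requirement of each $q_{p,\ell}$ to a constant is genuinely insufficient: if the LP covers $d'_p$ with many profiles each of capacity about $\varepsilon d'_p$, then at the corresponding height you need roughly $1/\varepsilon$ sets to accumulate capacity $d'_p$, and a constant multiplicity forces only $O(\varepsilon d'_p)$ of capacity. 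Second, your fallback --- ``one test point per distinct KC threshold, with each profile contributing to only one scale'' --- is precisely the uniform-capacity decomposition of Chakrabarty et al., which the paper explicitly argues fails for non-uniform profiles (a single profile has different capacities at different points, so it lives at different scales at different points, and recombining the sub-instances costs $\Omega(\log C)$ or, with extra geometric work, $\Omega(\log\log C)$).

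The paper's fix has two ingredients you would need to supply. (i) A preprocessing step: buy integrally every set with $x_z \geq 1/\beta$ (for $\beta = 8$), pass to residual demands $d'_p$, and scale the remaining fractional solution to $x' = \beta x$; the KC inequality applied to the bought set $S$ guarantees $x'$ covers the residual demands with truncated capacities to extent $\beta d'_p$, i.e.\ with an 8-fold surplus. (ii) Covering requirements that are read off from the LP solution itself: $m_{p,j} = \lfloor \sum_{z :\, \class(z,p) \leq j} x'_z \rfloor$, a \emph{cumulative} count over all classes up to $j$, not a constant. Then any integral cover of $\P$ contains at least $m_{p,j}$ objects of class at most $j$ at $p$, and a telescoping computation converts these cumulative counts into total capacity $\sum_j |\tilde S_{p,j}|\, 2^{-j-1} d'_p$; the losses from the floors are one unit per level, weighted by $2^{-j-1} d'_p$, so they sum to at most $d'_p$ over \emph{all} levels --- this geometric-series absorption, paid for by the 8-fold surplus from step (i), is exactly what kills the $\log$ factor. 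Finally, note the directions you need are: $x'$ feasible for the set-cover LP of $\P$ (to bound the cost of the rounded solution by $\gamma\beta w^*$), and integral cover of $\P$ implies integral cover of the residual instance of $\I$; the direction ``integral cover of $\I$ gives a cover of $\P$'' is neither needed nor, with these solution-dependent multiplicities, obviously true.
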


To do this, we first solve an LP for $\I$ strengthened by $KC$ inequalities. Given some solution $x$ to this LP with cost $w^*$, we create the set cover instance $\P$ with sets corresponding to objects induced from $\I$ as described above. The instance $\P$ will satisfy the following two properties. First, any feasible integral solution to $\P$ of cost $w(P)$ gives a solution to $\I$ of cost at most $w(P) + O(1) w^*$. Second, the basic set cover LP for $\P$ has a feasible solution of value at most $O(1) w^*$. It is easily seen that these properties directly imply Theorem \ref{thm:geomcap2}. 
We now give the details.

\noindent {\bf KC LP for $\I$.} 
The standard LP relaxation for $\I$ has variables $x_z$ for each set $z \in Z$, and is the following.
\[ \min \sum_z w_z x_z      \qquad  \sum_z c_z(p) x_z \geq d_p \quad \forall p, \qquad  x_z \in [0,1] \]
This has unbounded integrality gap, and we strengthen the constraint for each $p$ by KC inequalities to get,
\[
\sum_{ z \notin S} \min( c_z(p), d_p -c_S(p))\,x_z   \geq   d_p - c_S(p)  \qquad \forall p, \forall S \subset Z,\,c_S(p) \leq d_p  \]
where $Z$ denotes the collection of sets (objects) in $\I$ and
for a subset $S$ of objects, $c_S(p) = \sum_{z \in S} c_z(p)$.

Let $x$ denote some optimum solution to this LP, and let $w^*$ be its cost. To create instance $\P$, we first pre-process $x$ as follows.
Let $S$ be the set of objects $z$ with $x_z \geq 1/\beta$, where $\beta = O(1)$ will be specified later. 
We select the objects in $S$ integrally in the cover for $\I$. 
Let $d'_p = \max(0,d_p - c_S(p))$ be the residual demand of point $p$. Then by the KC inequalities (applied to this set $S$), $x$ satisfies the following for each $p$,
\[ \sum_{z \in Z \setminus S} \min( c_z(p), d'_p) x_z \geq d'_p.\]
Let $Z'$ denote the residual objects $Z \setminus S$, and note that $x_z \leq 1/\beta$ for each $z \in Z'$. Consider the solution $x'$ where $x'_z= \beta x_z$ for $z \in Z'$ (and hence $x'_z \leq 1$ for $z \in Z'$). Then, the solution $x'$ satisfies
\beq\label{eq:bcover} \min(c_z(p),d'_p) x'_z \geq \beta d'_p  \qquad \forall p,\eeq
covering the residual demands $d'_p$ by an extra $\beta$ factor,
and has cost at most $\beta \sum_{z \in Z'} w_z x_z \leq \beta w^*$.

\vspace{-2mm}

\paragraph{The instance $\P$.} We use $x'$ to create the set cover instance $\P$.
For each $z \in Z'$, we create the induced object $\tilde{z}$ as defined in Theorem \ref{thm:geomcap2}. For each point $p \in I$, we will create multiple points $q_{p,j}$ for each $j=0,1,\ldots,\lfloor \log_2 d'_p \rfloor$, where $q_{p,j} = (p,2^{-j}d'_p)$, i.e.~the first $d$ coordinates of $q_{p,j}$ are the same as that of $p$, and the $d+1$-th coordinate is $2^{-j}d'_p$. 

For $z \in Z'$, we define the class $\class(z,p)$ of $z$ with respect to $p$ as the smallest non-negative integer $j$ such that $c_z(p) \geq 2^{-j} d'_p$. In other words, if $c_z(p) \in [2^{-j} d'_p,2^{-j+1}d'_p)$ for $j \geq 1$, and $0$ if $c_z(p) \geq d'_p$.
We set the covering requirement of $q_{p,j}$ to be $m_{p,j} = \lfloor \sum_{z : \class(z,p)\leq j } x'_{z} \rfloor.$

This completes the description of $\P$. We now show that it satisfies the properties mentioned earlier.

Let $\tilde{x}$ denote an LP solution induced from $x'$  to objects in $\P$ by defining $\tilde{x}_{\tilde{z}} = x'_z$ for each $z \in Z'$. 
\begin{claim} \label{cl:x'} $\tilde{x}$ is a feasible solution for the basic set cover LP for $\P$.
\end{claim}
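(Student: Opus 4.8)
The claim asserts that $\tilde{x}$, the LP solution for $\P$ inherited from $x'$ by setting $\tilde{x}_{\tilde z} = x'_z$, satisfies every covering constraint of the standard set-cover LP for $\P$; that is, for each point $q_{p,j}$ we need $\sum_{\tilde z \ni q_{p,j}} \tilde{x}_{\tilde z} \ge m_{p,j}$. The plan is to first translate membership in the set-cover instance back into a statement about capacities: by construction of the induced objects, $q_{p,j} = (p, 2^{-j} d'_p)$ lies in $\tilde z$ exactly when $c_z(p) \ge 2^{-j} d'_p$, which by definition of $\class(z,p)$ is the same as $\class(z,p) \le j$. Hence the left-hand side of the constraint is precisely $\sum_{z : \class(z,p) \le j} x'_z$, and since $m_{p,j} = \lfloor \sum_{z : \class(z,p) \le j} x'_z \rfloor$, the constraint holds trivially (indeed with the LHS at least its own floor). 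I would also check the bound $\tilde{x}_{\tilde z} = x'_z = \beta x_z \le 1$, which holds because $z \in Z'$ means $x_z \le 1/\beta$, so $\tilde{x}$ is a valid LP point.

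The only genuine point requiring care — and the step I expect to be the main (minor) obstacle — is verifying that the geometric membership condition is exactly $c_z(p) \ge 2^{-j} d'_p$ with the right direction of inequality and the right treatment of boundary cases. The induced object is $\tilde z = \cup_p (p \times [0, c_z(p)])$, so the fiber over $p$ is the closed segment $[0, c_z(p)]$; the point $q_{p,j}$ has last coordinate $2^{-j} d'_p \ge 0$, so it belongs to this segment iff $2^{-j} d'_p \le c_z(p)$. Matching this against the definition "$\class(z,p)$ is the smallest $j \ge 0$ with $c_z(p) \ge 2^{-j} d'_p$" gives $\class(z,p) \le j \iff c_z(p) \ge 2^{-j} d'_p \iff q_{p,j} \in \tilde z$, as needed. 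One should note $j$ ranges only over $0, \ldots, \lfloor \log_2 d'_p \rfloor$, so $2^{-j} d'_p \ge 1$ and the points $q_{p,j}$ are well-defined; and $d'_p > 0$ for every point that survives into $\P$ (points with $d'_p = 0$ contribute no $q_{p,j}$ and no constraint).

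Putting this together: for each surviving point $p$ and each $j \in \{0, \ldots, \lfloor \log_2 d'_p\rfloor\}$,
\[
\sum_{\tilde z \ni q_{p,j}} \tilde{x}_{\tilde z} \;=\; \sum_{z \in Z' : \class(z,p) \le j} x'_z \;\ge\; \Big\lfloor \sum_{z \in Z' : \class(z,p) \le j} x'_z \Big\rfloor \;=\; m_{p,j},
\]
and $0 \le \tilde{x}_{\tilde z} \le 1$ for all $\tilde z$, so $\tilde{x}$ is feasible for the basic set-cover LP of $\P$. This is essentially a bookkeeping argument; no real inequality manipulation is needed beyond the observation that a real number is at least its own floor, and the KC-slack factor $\beta$ plays no role in this particular claim (it will be used in the companion claim that bounds the cost and in rounding the residual demands).
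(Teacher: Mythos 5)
Your proof is correct and is essentially identical to the paper's: both reduce the covering constraint for $q_{p,j}$ to the observation that $q_{p,j}\in\tilde z$ iff $c_z(p)\ge 2^{-j}d'_p$ iff $\class(z,p)\le j$, so the fractional coverage equals $\sum_{z:\class(z,p)\le j}x'_z$, which is at least its own floor $m_{p,j}$. Your extra checks (the box constraint $x'_z\le 1$ on $Z'$ and the closed-segment boundary case) are fine and only make the bookkeeping more explicit.
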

\begin{proof} We need to show that each $q_{p,j}$ is covered fractionally to extent at least $m_{p,j}$. Now, by construction of $\P$, the point $q_{p,j}$ is covered by some object $\tilde{z}$ iff $c_z(p) \geq 2^{j} d'_p$. By the definition of $m_{p,j}$ and $\class(z,p)$ this directly gives
\[ \sum_{\tilde{z}:q_{p,j} \in \tilde{z}} \tilde{x}_z = \sum_{z} x'_z \cdot  \mathbbm{1}[c_z(p)\geq 2^{-j}d'_p] =\sum_{z:\class(z,p)\leq j} x'_z \geq m_{p,j}. \qedhere\] 
\end{proof}

Let $\tilde{Z}$ denote the set of objects in $\P$. Given a subset $\tilde{S} \subseteq \tilde{Z}$ of objects, let $S'$ denote the corresponding sets in $Z'$.
\begin{claim} 
\label{cl:feasible}
If $\beta \geq 8$, then given any feasible integral set cover $\tilde{S} \subseteq \tilde{Z}$ for $\P$, the corresponding $S'$ satisfies the demands $d'_p$ for each $p$, i.e. $\sum_{z \in S'} c_z(p) \geq d'_p$. 
\end{claim}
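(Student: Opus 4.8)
The goal is to show that any feasible integral cover $\tilde{S}$ for $\P$ translates back into a solution of the residual instance, i.e.\ $\sum_{z \in S'} c_z(p) \geq d'_p$ for every point $p$. Fix a point $p$; if $d'_p = 0$ there is nothing to prove, so assume $d'_p > 0$. The plan is to charge the demand $d'_p$ against the capacities of chosen sets grouped by their class with respect to $p$. For $j = 0,1,\ldots,\lfloor \log_2 d'_p\rfloor$, let $a_j = |\{z \in S' : \class(z,p) = j\}|$ be the number of chosen sets whose capacity at $p$ lies in $[2^{-j}d'_p, 2^{-j+1}d'_p)$ (with the convention that $j=0$ collects all sets with $c_z(p) \geq d'_p$). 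Each such set contributes at least $2^{-j}d'_p$ to $\sum_{z\in S'} c_z(p)$, so it suffices to show $\sum_j a_j 2^{-j} \geq 1$.

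\textbf{Key steps.} First I would use feasibility of $\tilde{S}$ at each point $q_{p,j}$: since $q_{p,j}$ is covered by exactly those $\tilde{z}$ with $\class(z,p)\le j$, feasibility gives $\sum_{i \le j} a_i \geq m_{p,j} = \lfloor \sum_{z:\class(z,p)\le j} x'_z \rfloor$. Second, I would lower bound the right-hand side using the KC-derived inequality \eqref{eq:bcover}: since $\min(c_z(p),d'_p)x'_z \geq \beta d'_p$ summed appropriately (more precisely $\sum_z \min(c_z(p),d'_p)x'_z \geq \beta d'_p$), and since a set of class $> j$ contributes at most... — here one splits $\sum_z \min(c_z(p),d'_p)x'_z$ into the part from classes $\le j$ (each term at most $2^{-j+1}d'_p \cdot x'_z$, roughly, actually at most $d'_p x'_z$) and the part from classes $> j$ (each term at most $2^{-j}d'_p x'_z$). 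Taking $j$ to be the largest index with $\lfloor \log_2 d'_p\rfloor$, the tail beyond the largest class contributes a geometrically small amount, so $\sum_{z:\class(z,p)\le j} x'_z$ is at least $\beta - O(1)$, i.e.\ $m_{p,\lfloor\log d'_p\rfloor} \geq \beta - O(1)$. Third, I would combine: let $N_j = \sum_{i\le j} a_i \geq m_{p,j}$ and write $\sum_j a_j 2^{-j}$ via Abel summation as $\sum_j N_j(2^{-j} - 2^{-(j+1)}) + (\text{boundary})= \sum_j N_j 2^{-(j+1)} + \ldots$; plugging in $N_j \geq \lfloor \sum_{z:\class\le j} x'_z\rfloor$ and using that these partial sums grow to roughly $\beta$, a short computation shows $\sum_j a_j 2^{-j} \geq 1$ provided $\beta \geq 8$ (the factor $8$ absorbs the floor losses — up to $1$ per scale over $\log d'_p$ scales is handled because each floor loss at scale $j$ costs only $2^{-j}$ in the weighted sum, totalling at most $2$ — plus the geometric tail and the slack between $2^{-j}d'_p$ and the true capacity).

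\textbf{Main obstacle.} The delicate point is handling the floor operations in $m_{p,j} = \lfloor \sum_{z:\class(z,p)\le j} x'_z\rfloor$ simultaneously across all scales $j$ and showing their cumulative effect is benign. Naively each floor can lose almost $1$ unit of covering requirement, and there are $\Theta(\log d'_p)$ scales, so one must exploit that a deficit at scale $j$ only matters when weighted by $2^{-j}$ in the reconstructed capacity sum; this geometric damping is what makes a constant $\beta$ suffice rather than $\beta = \Theta(\log d'_p)$. A clean way to organize this: show directly that $\sum_{j} (m_{p,j} - m_{p,j-1})\, 2^{-j} d'_p \geq d'_p$ would be false due to floors, so instead track $\sum_j m_{p,j} 2^{-(j+1)}d'_p$ against $\sum_z \min(c_z(p),d'_p)x'_z/2 \geq (\beta/2) d'_p$ and separately bound the total floor loss $\sum_j (\sum_{z:\class\le j}x'_z - m_{p,j})2^{-j}d'_p \leq \sum_j 2^{-j}d'_p \leq 2d'_p$. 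Choosing $\beta \geq 8$ then gives the required factor with room to spare. I expect the bookkeeping between the two natural groupings (by "class exactly $j$" for the reconstruction versus "class at most $j$" for the LP feasibility constraint) to be the part that needs the most care to get right.
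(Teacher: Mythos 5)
Your proposal is correct and follows essentially the same route as the paper's proof: decompose the chosen sets by class, use feasibility of $\tilde{S}$ at each point $q_{p,j}$ to get $\sum_{i\le j}a_i\ge m_{p,j}$, apply Abel summation to pass to the partial sums weighted by $2^{-j-1}d'_p$, absorb the floor losses into a geometric series of total at most $O(d'_p)$, and relate the remaining weighted sum of $x'_z$ back to \eqref{eq:bcover} to extract a $\beta/O(1)$ surplus. The only (immaterial) difference is that you propose swapping the order of summation over the full geometric tail, which yields a constant $\beta/2$ where the paper keeps only the diagonal term and gets $\beta/4$; both comfortably suffice for $\beta\ge 8$.
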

\begin{proof} 
Fix a point $p$. For any $j$, let $\tilde{S}_{p,j}$ be the objects in $\tilde{S}$ that cover $q_{p,j}$. As $\tilde{S}$ is feasible, $|\tilde{S}_{p,j}| \geq m_{p,j}$ 
and by construction each $z \in Z'$ corresponding to an object in $\tilde{S}_{p,j}$ has class $\class(z,j) \leq j$. Let $r_{p,j} = |\tilde{S}_{p,j}|-|\tilde{S}_{p,j-1}|$ be the number of objects of class exactly $j$. Here $\tilde{S}_{p,-1}$ is the empty set.

Then the capacity contributed to $p$ by objects in $S'$ satisfies
\begin{align*}
    \sum_{z \in S'} c_z(p) & \geq  \sum_{j\geq 0} r_{p,j} 2^{-j} d'_{p} = \sum_{j\geq 0}(|\tilde{S}_{p,j}|-|\tilde{S}_{p,j-1}|) 2^{-j} d'_p = \sum_{j \geq 0} |\tilde{S}_{p,j}|2^{-j-1} d'_p \\ 
    & \geq  \sum_{j \geq 0} m_{p,j} 2^{-j-1} d'_p \geq \Bigg( \sum_{j \geq 0} \Big(\sum_{z : \class(z,p)\leq j } x'_{z} \Big) -1 \Bigg) 2^{-j-1} d'_p 
\end{align*}
where the first inequality uses that $c_z(p) \geq 2^{-j} d_p'$ if $\class(z,p)=j$, the second inequality uses that $\tilde{S}_{p,j} \geq m_{p,j}$ and the third inequality uses that by definition $m_{p,j} \geq (\sum_{z : \class(z,p)\leq j } x'_{z}) -1 $.

As $\sum_{j \geq 0} 2^{-j-1} d_p' \leq d_p'$,  and lower bounding $\sum_{z : \class(z,p)\leq j } x'_{z}$ by $\sum_{z : \class(z,p)= j } x'_{z}$, the quantity above is at least 
\[ \sum_{j \geq 0} \Big(\sum_{z : \class(z,p) = j } x'_{z} \Big) 2^{-j-1}d'_p   -d'_p \]
As $\min(c_z(p), d_p')  \leq d_p' 2^{-j+1}$ if $\class(z,p)=j$ for any $j$, this is at least
\[ \sum_{j \geq 0} \Big(\sum_{z : \class(z,p) = j } \frac{1}{4} \min(c_z(p),d'_p) x'_z \Big)  -d'_p \geq \frac{\beta}{4}  d'_p - d'_p \geq d'_p\]
where the first inequality uses \eqref{eq:bcover} and the second inequality that $\beta \geq 8$.
\end{proof}

We can now complete the proof of Theorem \ref{thm:geomcap2}.
Let $\tilde{S}\subseteq \tilde{Z}$ be a  $\gamma$-approximate solution for $\P$ with respect to the basic set cover LP relaxation for $\P$.
Then by Claim \eqref{cl:x'} as $\tilde{x}$ is a feasible LP solution, the cost of $\tilde{S}$ is at most $\gamma w(\tilde{x}) = \gamma w(x')$. As $x' \leq \beta x$, this is at most $\gamma \beta w^*$.
Moreover, by Claim \ref{cl:feasible}, the solution $S'$ satisfies the residual demands and hence $S \cup S'$ is a feasible solution for $I$. As $S$ has cost at most $\beta w^*$, this gives an overall approximation guarantee of $(\gamma +1 ) \beta \leq 9\gamma$ for $\beta=8$.

\subsection{$O(1)$ approximation for TRC}\label{subsec:trc}

We now apply Theorem~\ref{thm:geomcap} to obtain an $O(1)$ approximation for TRC. 
The TRC instance has three types of objects: axis-aligned rectangles and right angled triangles of slope 1 and -1 respectively (denote these objects by $R$,$T(1)$ and $T(-1)$ respectively). 
By scaling an optimum solution to the LP relaxation by a factor of 3, it is enough to obtain an LP-based $O(1)$ approximation for each one of the classes $R,T(1),T(-1)$. It suffices to upper bound the union complexity by $O(t)$ due to Theorem~\ref{thm:unionmulticover}.

To upper bound the union complexity, consider the upper envelope of $t$ induced objects, and define an \emph{edge} to be a maximal connected piece of the upper envelope belonging to the same object (see figure~\ref{fig:edges}). Then clearly,
the union complexity of $t$ induced objects is $O(t + \text{number of edges})$.
\begin{figure}
    \centering
    \includegraphics[width=\textwidth]{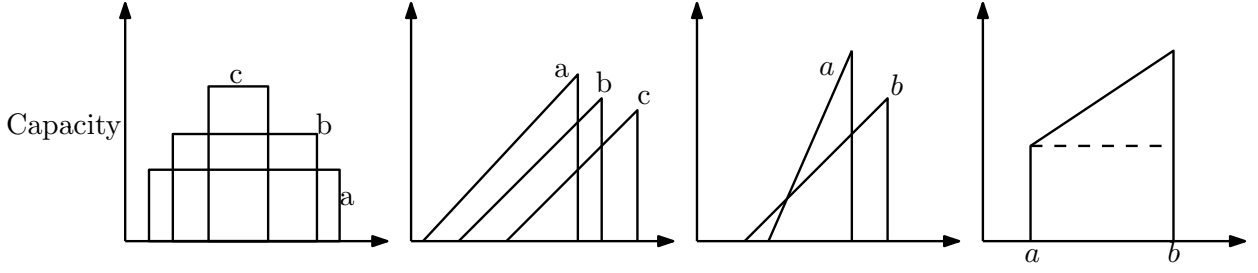}
    \caption{1. Edges of upper envelope of by rectangles $a,b,c$ have sequence $abcba$ \newline 2. Edges of upper envelope of right triangles $a,b,c$ of slope $1$ have sequence $abc$\newline 3. Upper envelopes of right triangles $a,b$ of positive slope intersect in $\leq 2$ points, with edge sequence $aba$.\newline
    4. Decomposing a linear profile supported on $[a,b]$ into a rectangular and right triangular profile.
    }
    \label{fig:edges}
\end{figure}

To bound the number of edges, let us label them by the objects they belong to and consider the sequence of cells from left to right in the upper envelope of $t$ objects. To bound the length of the sequence, let us use the notion of a Davenport-Schinzel sequence~\cite{AgarwalS10}, as we will also need this later for other applications.

\begin{definition}\label{def:ds}
A Davenport-Schinzel (DS) sequence of order $s$ on $n$ symbols is a sequence $a_1,a_2,\ldots,a_m$ satisfying $a_i \neq a_{i+1}$ for all $1 \leq i \leq m-1$, which has no alternating subsequence of length $s+2$ i.e. no subsequence like $aba\cdots a$. Let $\lambda_s(n)$ denote the maximum length of a DS sequence of order $s$ on $n$ symbols.
\end{definition}

It is well-known (and easily verified) that for any two rectangles $\{a,b\}$ in $R$, there cannot be a subsequence of edges $abab$, and hence the edge sequence is $DS$ of order $2$. Similarly, there cannot be a subsequence of edges $aba$ for any one of $T(1)$ or $T(-1)$ (as the slopes are $1$), and hence the edge sequence is $DS$ of order $1$ (see Figure \ref{fig:edges}).
We will consider triangles with arbitrary slopes (or more generally piece-wise linear functions) in Section~\ref{sec:appl}.

Using the well-known bounds $\lambda_1(n) = n, \lambda_2(n) \leq 2n-1$~\cite{AgarwalS10}, we have that the union complexity of $t$ objects in any one of $R,T(1)$ or $T(-1)$ is $O(t)$, implying our $O(1)$-approximation for TRC.

\subsection{Further applications of Theorem~\ref{thm:geomcap}}\label{sec:appl}
We now consider various other applications of Theorem~\ref{thm:geomcap}. Some of these applications use additional flexibility that our setting and the proof of Theorem~\ref{thm:geomcap} gives such as being able to round capacities to powers of $2$, or splitting an object into multiple objects.

\vspace{-3mm}

 \paragraph{Decomposing capacity profiles.} Instead of using Theorem~\ref{thm:geomcap} directly, note that we can also first decompose a capacity profile $z$ into two (or more) simpler capacity profiles $z_1,z_2$ with $c_z = c_{z_1}+c_{z_2}$, assigning them the same cost as $z$. This operation at most doubles the approximation factor, but allows us to work with simpler geometric structures giving improved guarantees, similar to our algorithm for TRC.
 
 We collect below some applications (Table~\ref{tab:my_label}) using these ideas.

\vspace{-3mm}

\paragraph{Covering points on the plane with linear capacity profiles.} Consider an instance $\I$ of covering points $(x,y)$ in the plane with sets $z$ with capacity profiles $c_z(x,y) = (ax+by)_{+}$ where $m_{+}$ denotes $\max(m,0)$. Then, applying the framework of theorem~\ref{thm:geomcap} to $\I$,
we get an instance $\P$ of covering points in the upper half-space ($(x,y,q)$ with $q \geq 0$) of $\R^3$, with the induced objects corresponding to half-spaces $\{(x,y,q): 0 \leq q\leq ax+by\}$ for capacity profile $(ax+by)_{+}$. Now, a point $(x,y,q)$ with $q\geq 0$ lies in the object induced by $(ax+by)_{+}$ if and only if $(x,y,q)$ lies in the half-space $ax+by\geq q$. Hence, $\P$ is simply a problem of covering points by half-spaces in $\R^3$. It is well-known that the union complexity of $t$ half-spaces in $\R^3$ is $O(t)$, giving a $O(1)$ approximation.
To see this, note first that the union of half-planes is the complement of a convex polyhedron. Now use point-plane duality for the planes involved in the half-planes to note the equivalence of the faces of the convex polyhedron and those of the convex hull of the dual points. Then, use McMullen's Upper Bound theorem~\cite{Matousek02} to bound the number of faces of the convex hull.

\vspace{-3mm}

\paragraph{Covering points on a line.}
Let us fix the universe to be the real line. 
Let us note that if we have a collection of capacity profiles such that any two of them intersect in at most $s$ positions, then by definition of DS sequences, the upper envelope of the induced objects obtained by applying Theorem~\ref{thm:geomcap2} is a DS sequence of order $s$, and hence has union complexity $O(\lambda_s(n))$.
Given the well-known and non-trivial bounds   for $\lambda_s(n)$~\cite{AgarwalS10} (see table~\ref{tab:ds}),  Theorem \ref{thm:geomcapeasy} gives several interesting results.

\begin{table}[h!]
    \centering
    \begin{tabular}{|c|c|}
        \hline
         Order $s$ & $\lambda_s(n)$\\
         \hline\hline
         1 & $n$ \\
         \hline
         2 & $2n-1$ \\
         \hline
         3 & $O(n \alpha(n))$\\
         \hline
         4 & $O(n 2^{\alpha(n)})$\\
         \hline
         $> 4$, $s=2k$ & $O(n 2^{O((\alpha(n))^{k-1})})$\\
         \hline
         $> 4$, $s=2k+1$ & $O(n2^{O((\alpha(n))^{k-1} \log \alpha(n))})$\\
         \hline
    \end{tabular}
    \caption{Maximum length of Davenport-Schinzel sequence $\lambda_s(n)$ for various $s$, here $\alpha(n)$ is the very slowly growing inverse of Ackermann's function}
    \label{tab:ds}
\end{table}

\vspace{-4mm}

\paragraph{Linear capacity profiles on intervals.}
Consider linear capacity profiles supported on intervals $[a,b]$ i.e. profiles $z$ of the form $c_z(x) =  mx+q \text{ for } x \in [a,b]$ (this generalizes the triangular profiles of slope $1$ considered before). Losing at most factor $2$, assume that $m>0$ (we partition into two sub-problems each with slopes $m$ of the same sign). At another factor 2 loss, break the linear profile into two profiles $c_{z_1}(x) = m(x-a)$ (a \emph{right triangular profile}) and $c_{z_2}(x) = ma+q$ (a \emph{rectangular} profile) both supported on $[a,b]$. Applying Theorem~\ref{thm:geomcap} we obtain a problem of covering points in the plane by induced objects that are right triangles (of positive slope) and rectangles, both touching the $x$-axis. As 
any two such rectangles intersect in at most $2$ places, and similarly for any two such triangles,  the union complexity will be $O( \lambda_2(t)) = O(t)$, giving an $O(1)$-approximation.

\vspace{-3mm}

\paragraph{Piece-wise linear profiles.}
We can extend this to piece-wise linear profiles with at most $s$ pieces. Naively decomposing the piece-wise linear profile into $s$ linear pieces would give an $O(s)$-approximation. Instead, think of the object induced by a $s$-piece-wise linear profile as a union of two objects - one is the union of $s$ right triangular profiles and the other is a union of $s$ rectangular profiles.

Chekuri and Inamdar~\cite{InamdarC19} showed the following result for geometric set systems.
\begin{theorem}\label{cl:inamdar}~\cite{InamdarC19}
Suppose the union complexity of any $t$ objects in a geometric set family $F$ in $\R^d$ is $O(t \phi(t))$. Then, there is an LP-based $O(\log s + \log \phi(ts))$-approximation for covering by objects formed by taking unions of $\leq s$ objects in $F$.
\end{theorem}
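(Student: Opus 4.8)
The plan is to reduce covering by $s$-fold unions to ordinary set cover with low union complexity, via a "colour-splitting" of each compound object into its $s$ constituent pieces, and then pay an $O(\log s)$ factor for this refinement together with the $O(\log \phi(ts))$ factor from Theorem~\ref{thm:unionmulticover} applied to the $\le ts$ primitive pieces. Concretely, let $\mathcal{C}$ be the family of compound objects, each $C \in \mathcal{C}$ being a union $C = z_1^C \cup \cdots \cup z_s^C$ of $\le s$ objects from $F$. First I would solve the standard set-cover LP for the instance with sets $\mathcal{C}$ and obtain a fractional solution $x$ of cost $\mathrm{OPT}_{LP}$. I then form the auxiliary instance whose sets are all the primitive pieces $\{z_i^C : C \in \mathcal{C},\ i \le s\}$ (so at most $ts$ of them, drawn from $F$), with the natural assignment $\hat x_{z_i^C} = x_C$; this is feasible for the primitive instance because any point covered to extent $m_p$ by the compound sets is covered to extent $\ge m_p$ by their pieces. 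The set system of these primitive pieces has union complexity function $\phi(\cdot)$, so Theorem~\ref{thm:unionmulticover} yields an integral primitive cover of cost $O(\log \phi(ts)) \cdot \mathrm{OPT}_{LP}$.

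The remaining issue is that an integral cover by primitive pieces does not directly give an integral cover by compound objects of comparable cost: choosing one piece $z_i^C$ forces us to "pay" for the whole compound object $C$, and a single $C$ may have had many of its pieces selected, or — worse — we may select one piece from each of many different compound objects while the LP only charged a small fractional mass to each. To control this I would use a standard LP-rounding / filtering argument on top of the primitive cover: after obtaining the fractional solution $\hat x$, randomly round (or use the threshold rounding implicit in Theorem~\ref{thm:unionmulticover}) with $O(\log s)$ independent repetitions, so that with high probability every point's covering requirement is met by primitive pieces, and then for each compound $C$ include $C$ in the output iff at least one of its pieces was selected. The cost blow-up is then bounded by the $O(\log s)$ repetition factor times the union-complexity approximation factor, giving the claimed $O(\log s + \log \phi(ts))$. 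The key technical point is that the $O(\log s)$ over-sampling both fixes feasibility of the multi-cover after the lossy "promote a piece to its compound object" step and simultaneously bounds the expected number of distinct compound objects touched, since each piece $z_i^C$ carries fractional weight at most $1$ and the $s$ pieces of $C$ contribute total weight $\le s \cdot x_C$, so paying for $C$ once instead of for all its selected pieces loses only a factor $O(s)$ in a naive bound — which the logarithmic sampling converts into an additive $O(\log s)$ in the approximation ratio.

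The main obstacle I anticipate is precisely this last book-keeping: ensuring that "round the primitive instance, then lift each selected piece to its parent compound object" does not multiply the cost by $s$. The clean way around it is to observe that in the LP for the compound instance we may assume (by the integrality-gap-$2$ strengthening, or simply because we only need an $O(1)$-slack version) that the fractional solution already over-covers each point by a constant factor; then a single application of Theorem~\ref{thm:unionmulticover} to the primitive pieces, whose LP cost is $\le s$ times larger in the worst case but whose \emph{per-object} cost structure is inherited from the compound sets, yields an integral primitive cover in which — by a charging argument on the laminar/independent structure exploited by quasi-uniform sampling — each compound object is hit $O(1)$ times in expectation, so the lift costs only $O(1)$ per compound object. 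Combined with the $O(\log s)$ factor needed to go from a single sampled cover to one that meets \emph{all} the (multi-)covering requirements simultaneously, this gives the stated bound. I would cite \cite{InamdarC19} for the precise form of this argument rather than reproduce it, since the statement here is quoted verbatim from their work.
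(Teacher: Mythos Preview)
The paper does not prove this theorem; it is quoted from \cite{InamdarC19} and used as a black box. So there is no ``paper's own proof'' to compare against, and I can only assess your argument on its merits.

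Your proposal has a genuine gap, which you yourself flag but do not close: after rounding the primitive instance and lifting each selected piece back to its parent compound object, the cost can blow up by a factor of $s$, not $O(\log s)$. The primitive fractional solution $\hat x$ you build has value up to $s\cdot \mathrm{OPT}_{LP}$ (each compound $C$ contributes $s$ pieces, each carrying weight $x_C$ and cost $w_C$), so rounding it via Theorem~\ref{thm:unionmulticover} already yields an integral primitive cover of cost $O(s\log\phi(ts))\cdot\mathrm{OPT}_{LP}$; lifting can only decrease this, never recover the lost factor $s$. None of your proposed repairs works: repeating the sampling $O(\log s)$ times multiplies the cost rather than dividing it; the assertion that ``each compound object is hit $O(1)$ times in expectation'' misidentifies the difficulty, since the problem is how many \emph{distinct} compounds get hit, not how often any single one does; and deferring the ``precise form of this argument'' to \cite{InamdarC19} is circular, as that is precisely the statement to be proved.

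The intended argument avoids splitting altogether. The union of any $t$ compound objects coincides, as a point set, with the union of at most $ts$ primitives from $F$, and the number of faces on the boundary of a region can only drop when one coarsens the decomposition into objects. Hence the compound family has union complexity at most $O(ts\,\phi(ts))$, i.e.\ union-complexity function $\phi'(t)=O(s\,\phi(ts))$. Now apply Theorem~\ref{thm:unionmulticover} \emph{directly to the compound set system}: it gives an LP-based $O(\log\phi'(t))=O(\log s+\log\phi(ts))$ approximation. There is no primitive instance, no lifting step, and no factor-$s$ loss to repair.
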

Theorem~\ref{cl:inamdar} thus gives $O(\log s)$ approximations for both the right triangular profile unions and the rectangular profile unions, giving an $O(\log s)$ approximation for covering by $s$-piece-wise linear functions.

\vspace{-3mm}

\paragraph{Low degree polynomial curves.}
Suppose each profile is a degree $\leq s$ polynomial curve. Then, the number of intersections of the envelopes of any pair of induced objects is at most $s$ and hence the union complexity will be $O(\lambda_{s}(t))$. Theorem \ref{thm:geomcapeasy} now gives the results for degree $s$ polynomials given in Table~\ref{tab:my_label}.

\noindent{\bf Uniform capacities.} For uniform capacity profiles, Theorem~\ref{thm:geomcap} directly generalizes the framework of Chakrabarty et al.~\cite{ChakrabartyGK10}. In particular, the notion of priority cover which they introduce is exactly captured by the point-set incidences of the induced objects in our construction. Similarly as in~\cite{ChakrabartyGK10}, we can use the slack obtained from  KC inequalities to assume that there are only $O(\log C)$ different capacities where $C$ is the range of the capacities.
This reduced number of capacities could be useful for many applications (see e.g.~\cite{BansalP14}). For instance if the union complexity of the underlying set system is $t\phi(t)$, then the union complexity of the induced objects will only be $O(t \log C \phi(t))$ giving an $O(\log \log C +\log \phi(t))$ approximation.

{\small
\bibliography{biblio}  }

\begin{thebibliography}{10}

\bibitem{AntoniadisHMVW17}
Antonios Antoniadis, Ruben Hoeksma, Julie Mei{\ss}ner, Jos{\'{e}} Verschae, and
  Andreas Wiese.
\newblock A {QPTAS} for the general scheduling problem with identical release
  dates.
\newblock In {\em Intl.~Colloquium on Automata, Languages, and Programming,
  {ICALP}}, pages 31:1--31:14, 2017.

\bibitem{AronovBES14}
Boris Aronov, Mark de~Berg, Esther Ezra, and Micha Sharir.
\newblock Improved bounds for the union of locally fat objects in the plane.
\newblock {\em {SIAM} J. Comput.}, 43(2):543--572, 2014.

\bibitem{AronovES10}
Boris Aronov, Esther Ezra, and Micha Sharir.
\newblock Small-size {\textdollar}{\textbackslash}eps{\textdollar}-nets for
  axis-parallel rectangles and boxes.
\newblock {\em {SIAM} J. Comput.}, 39(7):3248--3282, 2010.

\bibitem{BansalP14}
Nikhil Bansal and Kirk Pruhs.
\newblock The geometry of scheduling.
\newblock {\em {SIAM} J. Comput.}, 43(5):1684--1698, 2014.

\bibitem{BansalP16}
Nikhil Bansal and Kirk Pruhs.
\newblock Weighted geometric set multi-cover via quasi-uniform sampling.
\newblock {\em JoCG}, 7(1):221--236, 2016.

\bibitem{BarNoyBFNS01}
Amotz Bar-Noy, Reuven Bar-Yehuda, Ari Freund, Joseph (Seffi)~Naor, and Baruch
  Schieber.
\newblock A unified approach to approximating resource allocation and
  scheduling.
\newblock {\em Journal of the ACM}, 48(5):1069--1090, 2001.

\bibitem{BatraGK18}
Jatin Batra, Naveen Garg, and Amit Kumar.
\newblock Constant factor approximation algorithm for weighted flow time on a
  single machine in pseudo-polynomial time.
\newblock In {\em Foundations of Computer Science, {FOCS}}, pages 778--789,
  2018.

\bibitem{BlumerEHW89}
Anselm Blumer, Andrzej Ehrenfeucht, David Haussler, and Manfred~K. Warmuth.
\newblock Learnability and the vapnik-chervonenkis dimension.
\newblock {\em J. {ACM}}, 36(4):929--965, 1989.

\bibitem{BoissonnatSTY98}
Jean{-}Daniel Boissonnat, Micha Sharir, Boaz Tagansky, and Mariette Yvinec.
\newblock Voronoi diagrams in higher dimensions under certain polyhedral
  distance functions.
\newblock {\em Discret. Comput. Geom.}, 19(4):485--519, 1998.

\bibitem{BronnimannG95}
Herv{\'{e}} Br{\"{o}}nnimann and Michael~T. Goodrich.
\newblock Almost optimal set covers in finite vc-dimension.
\newblock {\em Discret. Comput. Geom.}, 14(4):463--479, 1995.

\bibitem{CarnesS15}
Tim Carnes and David~B. Shmoys.
\newblock Primal-dual schema for capacitated covering problems.
\newblock {\em Math. Program.}, 153(2):289--308, 2015.

\bibitem{CarrFLP00}
Robert~D. Carr, Lisa~K. Fleischer, Vitus~J. Leung, and Cynthia~A. Phillips.
\newblock Strengthening integrality gaps for capacitated network design and
  covering problems.
\newblock In {\em Symposium on Discrete Algorithms, {SODA}}, pages 106--115,
  2000.

\bibitem{ChakrabartyGK10}
Deeparnab Chakrabarty, Elyot Grant, and Jochen K{\"{o}}nemann.
\newblock On column-restricted and priority covering integer programs.
\newblock In {\em Integer Programming and Combinatorial Optimization, {IPCO}},
  pages 355--368, 2010.

\bibitem{ChanGKS12}
Timothy~M. Chan, Elyot Grant, Jochen K\"{o}nemann, and Malcolm Sharpe.
\newblock Weighted capacitated, priority, and geometric set cover via improved
  quasi-uniform sampling.
\newblock In {\em Symposium on Discrete Algorithms, {SODA}}, pages 1576--1585,
  2012.

\bibitem{InamdarC19}
Chandra Chekuri and Tanmay Inamdar.
\newblock Algorithms for intersection graphs of multiple intervals and pseudo
  disks.
\newblock {\em CoRR}, abs/1911.01374, 2019.

\bibitem{CheungMSV17}
M.~Cheung, J.~Mestre, D.~Shmoys, and J.~Verschae.
\newblock A primal-dual approximation algorithm for min-sum single-machine
  scheduling problems.
\newblock {\em SIAM Journal on Discrete Mathematics}, 31(2):825--838, 2017.

\bibitem{clarksonV07}
Kenneth~L Clarkson and Kasturi Varadarajan.
\newblock Improved approximation algorithms for geometric set cover.
\newblock {\em Discrete \& Computational Geometry}, 37(1):43--58, 2007.

\bibitem{EvenRS05}
Guy Even, Dror Rawitz, and Shimon~(Moni) Shahar.
\newblock Hitting sets when the vc-dimension is small.
\newblock {\em Inf. Process. Lett.}, 95(2):358–362, July 2005.

\bibitem{FeigeKL19}
Uriel Feige, Janardhan Kulkarni, and Shi Li.
\newblock A polynomial time constant approximation for minimizing total
  weighted flow-time.
\newblock In {\em Symposium on Discrete Algorithms, {SODA}}, pages 1585--1595,
  2019.

\bibitem{HausslerW87}
David Haussler and Emo Welzl.
\newblock Epsilon-nets and simplex range queries.
\newblock {\em Discrete and Computational Geometry}, 2:127--151, 1987.

\bibitem{HohnMW18}
Wiebke H\"{o}hn, Juli\'{a}n Mestre, and Andreas Wiese.
\newblock How unsplittable-flow-covering helps scheduling with job-dependent
  cost functions.
\newblock {\em Algorithmica}, 80(4):1191--1213, April 2018.

\bibitem{ImM17}
Sungjin Im and Benjamin Moseley.
\newblock Fair scheduling via iterative quasi-uniform sampling.
\newblock In {\em Symposium on Discrete Algorithms, {SODA}}, pages 2601--2615,
  2017.

\bibitem{ImMP14}
Sungjin Im, Benjamin Moseley, and Kirk Pruhs.
\newblock Online scheduling with general cost functions.
\newblock {\em {SIAM} J. Comput.}, 43(1):126--143, 2014.

\bibitem{InamdarV18}
Tanmay Inamdar and Kasturi~R. Varadarajan.
\newblock On partial covering for geometric set systems.
\newblock In {\em Symposium on computational Geometry, SoCG}, volume~99 of {\em
  LIPIcs}, pages 47:1--47:14, 2018.

\bibitem{KedemLPS86}
Klara Kedem, Ron Livne, Janos Pach, and Micha Sharir.
\newblock On the union of jordan regions and collision-free translational
  motion amidst polygonal obstacles.
\newblock {\em Discrete Comput. Geom.}, 1(1):59–71, December 1986.

\bibitem{Matousek02}
Jiri Matousek.
\newblock {\em Lectures on Discrete Geometry}.
\newblock Springer-Verlag, Berlin, Heidelberg, 2002.

\bibitem{Moseley19}
Benjamin Moseley.
\newblock Scheduling to approximate minimization objectives on identical
  machines.
\newblock In {\em International Colloquium on Automata, Languages, and
  Programming, {ICALP}}, pages 86:1--86:14, 2019.

\bibitem{PyrgaR08}
Evangelia Pyrga and Saurabh Ray.
\newblock New existence proofs for epsilon-nets.
\newblock In {\em Symposium on Computational Geometry, SocG}, page 199–207,
  2008.

\bibitem{AgarwalS10}
Micha Sharir and Pankaj~K. Agarwal.
\newblock {\em Davenport-Schinzel Sequences and Their Geometric Applications}.
\newblock Cambridge University Press, USA, 2010.

\bibitem{Varadarajan10}
Kasturi Varadarajan.
\newblock Weighted geometric set cover via quasi-uniform sampling.
\newblock In {\em Symposium on Theory of Computing, {STOC}}, pages 641--648,
  2010.

\end{thebibliography}

\appendix

\section{Deadline Feasibility and Flow Network}\label{subsec:flow}
Here we give a necessary and sufficient condition for feasibility of candidate deadlines, as developed by Moseley~\cite{Moseley19}. Let $c_j$ be the candidate deadline for job $j$. Consider the following flow network.

\noindent {\bf The flow network $G$.}  See figure \ref{fig:flownetwork}.
There is a source node $s$ and sink node $t$.
Layer $1$ consists of $n$ nodes, one for each job $j$.
Layer $2$ consists of at most $v= \sum_j p_j$ nodes, one for each possible time slot where a job can execute.
There is a directed edge from $s$ to $j$ with capacity $p_j$. Each job node $j$ has a directed edge of capacity $1$ to each time slot in $\{1,\ldots,c_j\}$. From each time slot in $[v]$, there is a directed edge to $t$ with capacity $m$. 

It is clear that a feasible schedule exists if and only if a flow of value 
$\sum_j p_j$ exists (in fact any feasible integral flow corresponds to a valid schedule, and conversely). In other words, if and only if the minimum $s$-$t$ cut does not have value less than $\sum_j p_j$.  
Note that while the network is of size $O(nP)$ which can be exponentially large, we do not actually solve it algorithmically, and only use it to derive valid inequalities.

\begin{figure}[H]
\centering
\includegraphics[width=0.25\textwidth]{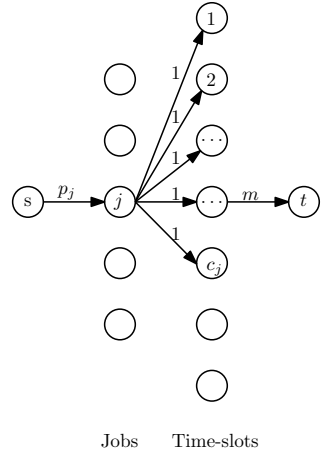}
\caption{The flow network for testing feasibility}\label{fig:flownetwork}
\end{figure}

\begin{lemma}[\cite{Moseley19}]
\label{lem:moseley1}
Given a set of completion times $c_j$, there is a feasible schedule if and only if the following condition holds for $b=0,1,\ldots,v$.
\beq 
\label{eq:cond2} \sum_{j \in [n]} \min(p_j,\max(c_j-b,0)) \geq \sum_{j \in [n]}  p_j - mb
\eeq
\end{lemma}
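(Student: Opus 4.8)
The plan is to derive the characterization from the structure of a minimum $s$-$t$ cut in the flow network $G$. We have already observed that a feasible schedule exists iff $G$ carries an integral flow of value $\sum_j p_j$ (route one unit of flow per unit of $p_j$ and read off, for each job, the slot nodes through which it routes), and since all capacities are integral this is equivalent to every $s$-$t$ cut having capacity at least $\sum_j p_j$. So it suffices to prove
\[
\mathrm{mincut}(G)\;=\;\min_{0\le b\le v}\ \Big(mb + \sum_{j\in[n]}\min\big(p_j,\max(c_j-b,0)\big)\Big),
\]
because then the inequality ``$\mathrm{mincut}(G)\ge \sum_j p_j$'' is exactly condition \eqref{eq:cond2} after rearranging. (We may assume $c_j\le v$ for every $j$, replacing any larger deadline by $v$; a single machine finishes all work by time $v$, so this changes neither feasibility nor \eqref{eq:cond2}.)

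First I would fix an arbitrary cut $(A,B)$ with $s\in A$, $t\in B$, and read off its capacity edge by edge: a job node $j\in B$ contributes $p_j$ (the edge $s\to j$ is cut); a job node $j\in A$ contributes $|\{i\le c_j:\ i\in B\}|$ (the unit-capacity edges from $j$ to sink-side slot nodes, noting $j$ has edges only to slots $1,\dots,c_j$); and a time-slot node $i\in A$ contributes $m$ (the edge $i\to t$). The key point is that once the set $T_A:=A\cap[v]$ of source-side time slots is fixed, the side chosen for each job node is independent across jobs and does not affect the $i\to t$ terms, so the cheapest cut with that $T_A$ has capacity $m|T_A| + \sum_j \min\big(p_j,\ |\{i\le c_j:\ i\notin T_A\}|\big)$.

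The main step is an exchange argument pinning down $T_A$: among all $T_A$ of a given size $b$, the choice $T_A=\{1,\dots,b\}$ is optimal. Indeed, if $i_1\in T_A$ and $i_2\notin T_A$ with $i_2<i_1$, then swapping $i_2$ in and $i_1$ out keeps $m|T_A|$ fixed and, for each job $j$, does not increase $|\{i\le c_j:\ i\notin T_A\}|$: it is unchanged unless $i_2\le c_j<i_1$, in which case it drops by one; hence by monotonicity of $\min$ the cut capacity does not increase, and iterating pushes all source-side slots onto $\{1,\dots,b\}$. For this choice $|\{i\le c_j:\ i\notin T_A\}| = c_j-\min(c_j,b) = \max(c_j-b,0)$, so the minimum cut capacity over cuts with $|T_A|=b$ equals $mb+\sum_j\min(p_j,\max(c_j-b,0))$; minimizing over $b\in\{0,\dots,v\}$ gives the displayed identity, and the lemma follows. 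The only delicate point is keeping the exchange argument honest — each job's term only sees slots up to its own deadline $c_j$, which is precisely why shifting the source-side slots leftward can never hurt — and beyond that it is routine bookkeeping.
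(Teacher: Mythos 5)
Your proof is correct and follows essentially the same route as the paper: reduce feasibility to $\mathrm{mincut}(G)\ge\sum_j p_j$ via max-flow/min-cut, decompose the cut capacity by edge type, use the leftward exchange argument to restrict the source-side time slots to a prefix $\{1,\dots,b\}$, and then optimize each job's side independently to get the $\min(p_j,\max(c_j-b,0))$ terms. The only cosmetic difference is your explicit normalization $c_j\le v$, which the paper leaves implicit.
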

\begin{proof}
For a subset $J \subseteq [n]$ of jobs and subset $T \subseteq [v]$ of times slots, let $(J,T)$ denote the $s$-$t$ cut with the part containing $s$ as $\{s\} \cup J \cup T$, and let   $\delta(J,T)$ denote its value. By considering the contribution of each type of edge to $\delta(J,T)$, we have that
\beq \delta(J,T) = \sum_{j \notin J} p_j +  \sum_{j \in J} | [c_j]\setminus T| +  m |T|, 
\label{eq:cut-val}
\eeq 
where $[c_j]\setminus T$ denotes the time slots in $\{1,\ldots,c_j\}$ that do not lie in $T$.

For any cut $(J,T)$, note that replacing some $t \in T$ with an earlier time $t'< t, t' \notin T$ can only reduce $\delta(J,T)$. Indeed, for $T'=T \cup\{t'\} \setminus \{t\}$, $|T|=|T'|$ and for each $j$,  $| [c_j]\setminus T'| \leq | [c_j]\setminus T|$ as $t'<t$, and hence by \eqref{eq:cut-val}, $\delta(J,T') \leq \delta(J,T)$.
Repeating this process, we can assume that there is some min-cut of the form  $\delta(J,[b])$  (possibly with $b=0$, corresponding to $T = \emptyset$). By \eqref{eq:cut-val}, we get 
\[ \delta(J,[b]) =  \sum_{j \notin J} p_j +  \sum_{j \in J} \max(c_j-b,0) +  m b  \]
Finally, for a fixed $b$, note that each $j$ contributes exactly $\max(c_j-b,0)$ or $p_j$ or depending on whether $j \in J$ or not.
This implies that 
\[ \min_{J \subseteq [n]}   \delta(J,[b]) =  \sum_{j \in [n]} \min(p_j ,\max(c_j-b,0)) +  m b,  \]
and hence the min-cut is at least $\sum_j p_j$ iff the right side is at least $ \sum_j p_j$ for all $b \geq 0$, giving the result.
\end{proof}

\begin{lemma} The number of points $b$ in the TRC can be reduced to $O(n \log n)$. 
\end{lemma}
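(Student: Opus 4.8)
The plan is to reduce the number of distinct demand points $b \in \{0,1,\ldots,v\}$ from the potentially exponential $v = \Theta(nP)$ down to $O(n\log n)$, by observing that most of the demand constraints are redundant. Recall that point $b$ has demand $d_b = \sum_j p_j - mb$, and the capacity contributed to $b$ by a chosen trapezoid/wedge is a piecewise-linear function of $b$ with a bounded number of breakpoints. The key structural fact is that each wedge $W_{j,i} = u_{p_j, c_{j,i}}$ (equivalently each trapezoid $T_{j,i}$) has a capacity profile that is piecewise linear in $b$ with only $O(1)$ breakpoints, and since there are $n$ jobs with $k = O(\log n)$ candidate deadlines each, the total number of breakpoints across all possible objects is $O(n\log n)$.

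First I would argue that the feasibility condition \eqref{eq:cond} (equivalently the covering constraint $\sum_j u_{p_j,c_j}(b) \geq d_b$) only needs to be checked at a small set of ``critical'' values of $b$. Both sides of the inequality, as functions of $b$: the right side $d_b = \sum_j p_j - mb$ is linear (slope $-m$), and the left side $\sum_j \min(p_j, \max(c_j - b, 0))$ is a concave, non-increasing, piecewise-linear function of $b$ whose breakpoints lie among the set $\{c_j - p_j : j\} \cup \{c_j : j\}$. Since we only ever consider deadlines $c_j$ from the candidate set $\{c_{j,0},\ldots,c_{j,k}\}$ with $k = O(\log n)$, the total number of possible breakpoints of the left side (over all possible choices of deadlines) is at most $2nk = O(n\log n)$. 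On any interval between consecutive breakpoints, the left side is linear (or the constraint is trivially satisfied because all wedges have become zero), so the difference (left side minus right side) is linear on each such interval, and a linear function on an interval attains its minimum at an endpoint. Hence it suffices to enforce the covering constraint only at the $O(n\log n)$ candidate breakpoints (plus the endpoints $b=0$ and $b=v$).

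Concretely, I would let $B$ be the set consisting of $0$, $v$, and all values $c_{j,i}$ and $c_{j,i} - p_j$ (clamped to $[0,v]$) over all $j \in [n]$ and $i \in \{0,1,\ldots,k\}$; then $|B| = O(nk) = O(n\log n)$. The claim is that a collection of wedges satisfies all demands $d_b$ for $b \in \{0,\ldots,v\}$ if and only if it satisfies $d_b$ for all $b \in B$. The ``only if'' direction is trivial. For the ``if'' direction, fix any $b^* \notin B$; it lies strictly between two consecutive elements $b_1 < b_2$ of $B$. On the open interval $(b_1,b_2)$ no chosen wedge's profile has a breakpoint (every breakpoint of every potentially-chosen wedge is in $B$), so $\sum_j u_{p_j,c_j}(b) - d_b$ is an affine function of $b$ on $[b_1,b_2]$; since it is $\geq 0$ at both endpoints $b_1, b_2$, it is $\geq 0$ throughout, in particular at $b^*$. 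This lets us discard all points outside $B$ from the TRC instance without changing feasibility, and the costs/objects are untouched, so the optimum value is preserved. After splitting trapezoids into rectangles and triangles (as in the main text), the resulting TRC instance has $O(n\log n)$ points and $O(n\log n)$ objects, which is polynomial in the input size.

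The main obstacle — really the only subtlety — is making sure the breakpoint bookkeeping is airtight: one must check that the relevant set of breakpoints is taken over \emph{all} candidate deadlines and \emph{all} jobs (not just those in a particular solution), since the set of points $B$ must be fixed before knowing which wedges get chosen, and that integer rounding (time being slotted, $b$ integral) causes no issue because restricting an affine-on-an-interval inequality to integer points is only weaker. A minor additional point is that the wedge profile $u_{p,c}(b) = \min(p, \max(c-b,0))$ indeed has its only breakpoints at $b = c-p$ and $b = c$ (and is identically $0$ for $b > c$), which is immediate from the definition; after the trapezoid decomposition $T_{j,i} = W_{j,i} - W_{j,i-1}$ and the subsequent splitting into a rectangle and two triangles, each such piece still has $O(1)$ breakpoints all drawn from $B$, so nothing is lost.
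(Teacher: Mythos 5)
Your proof is correct and follows essentially the same route as the paper: both arguments observe that every object's capacity profile is affine between consecutive support/breakpoint endpoints (of which there are only $O(n\log n)$ over all jobs and candidate deadlines), that the demand $d_b$ is affine in $b$, and hence that an affine constraint satisfied at the endpoints of each sub-interval holds throughout it. The only cosmetic difference is that the paper retains the two extreme points of each sub-interval determined by the supports $[a_z,b_z]$, whereas you retain the breakpoint set $B$ itself; these are the same argument.
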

In our TRC instance, we have one point for each $b=0,1,2,\ldots,v$ and at most $O(n \log n)$ sets (rectangles, triangles) $Z$. Identify each set $z \in Z$ by the interval $[a_z,b_z]$ on which it is supported. These intervals divide the line into at most $2 |Z| + 1$ sub-intervals (equivalence classes of points lying in the same set of intervals in $Z$). We show that for each sub-interval, it suffices to have two points in the instance.
\begin{claim}
Consider any sub-interval $U$ above and let $u_1$ and $u_2$ be its left-most and right-most points. If a subset $Z' \subseteq Z$ satisfies the demands of $u_1,u_2$, then $Z'$ satisfies the demands of all points in $U$.
\end{claim}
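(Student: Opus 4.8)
The plan is to fix a sub-interval $U$ with left-most point $u_1$ and right-most point $u_2$, and to show that for every point $u \in U$ and every set $z \in Z'$, the capacity $c_z(u)$ is at least $\min(c_z(u_1), c_z(u_2))$ — in fact I claim the stronger statement that $c_z(u) \geq \min(c_z(u_1), c_z(u_2))$ holds term by term. Granting this, if $\sum_{z \in Z'} c_z(u_1) \geq d_{u_1}$ and $\sum_{z \in Z'} c_z(u_2) \geq d_{u_2}$, then since the demand profile $d_b = \sum_j p_j - mb$ is linear (hence in particular its restriction to $U$ attains its maximum at one of the two endpoints $u_1, u_2$), we get $\sum_{z \in Z'} c_z(u) \geq \sum_{z \in Z'} \min(c_z(u_1), c_z(u_2)) \geq \min(d_{u_1}, d_{u_2}) \cdot (\text{something})$ — more carefully, I would argue directly: $d_u \leq \max(d_{u_1}, d_{u_2})$ by linearity of $d$, and $\sum_z c_z(u) \geq \sum_z c_z(u_i)$ for whichever endpoint $u_i$ has the larger demand, provided the per-set inequality is set up in the right direction. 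Let me instead organize the argument around the correct monotonicity, described next.

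The key structural fact is that $U$ is an \emph{equivalence class}: every point of $U$ lies in exactly the same collection of supporting intervals $[a_z, b_z]$. Hence for a fixed $z \in Z'$, either $U \cap [a_z, b_z] = \emptyset$ (and $c_z \equiv 0$ on $U$, so nothing to check), or $U \subseteq [a_z, b_z]$. In the latter case $c_z$ restricted to $U$ is one of: a constant (rectangular profile), or a linear function of slope $+1$ or $-1$ (triangular profile). In all three cases $c_z$ is \emph{monotone} on $U$ (constant being trivially monotone), so its minimum over $U$ is attained at $u_1$ or $u_2$: $c_z(u) \geq \min(c_z(u_1), c_z(u_2))$ for all $u \in U$. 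I would then combine this with the fact that $d_b$ is affine in $b$, so $d_u \leq \max(d_{u_1}, d_{u_2})$ for $u \in U$; without loss of generality $d_{u_1} = \max(d_{u_1}, d_{u_2})$ — wait, this is the wrong pairing. The clean way: since each $c_z$ is monotone on $U$ and $d_b$ is affine on $U$, it suffices to handle the two cases by noting $\sum_z c_z(u) \geq \min\bigl(\sum_z c_z(u_1), \sum_z c_z(u_2)\bigr)$ is \emph{not} generally true term-by-term with a single choice of endpoint, so I would instead argue: $\sum_z c_z$ is a sum of monotone functions on $U$... which need not be monotone.

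So the actual argument must use the demand structure. Here is the fix: $d_b$ is \emph{decreasing} in $b$ (slope $-m < 0$), so $d_u \leq d_{u_1}$ for all $u \in U$, where $u_1$ is the left endpoint. Thus it suffices to show $\sum_{z \in Z'} c_z(u) \geq d_{u_1} \geq d_u$ — but that is too strong for points near $u_2$ where $\sum c_z$ may be small. The correct resolution, which I expect to be the main obstacle and the real content of the claim, is to split each set's contribution by the sign of its slope on $U$: let $Z'_+ , Z'_-, Z'_0$ be the sets with increasing, decreasing, constant profile on $U$. For $z \in Z'_- \cup Z'_0$ we have $c_z(u) \geq c_z(u_2)$, and for $z \in Z'_+ \cup Z'_0$ we have $c_z(u) \geq c_z(u_1)$; moreover every triangle of slope $-1$ that is nonzero on $U$ and a triangle of slope $+1$ overlapping $U$, together with the linear demand, can be combined so that $\sum_{z} c_z(u) \geq \max\bigl(\sum_{z} c_z(u_1) - K(u-u_1),\ \sum_z c_z(u_2) - K(u_2 - u)\bigr)$ where $K$ counts the relevant triangles — and then a convexity/interpolation argument against the affine lower bound $d_u$ using $d_{u_1}, d_{u_2} \leq \sum c_z$ closes it. I would present this cleanly by observing $\sum_{z\in Z'} c_z(u)$ is a \emph{concave} piecewise-linear function of $u$ on $U$ (each $c_z$ is affine on $U$, so the sum is affine — even simpler!), hence lies above the chord joining its endpoint values, and the endpoint values dominate $d_{u_1}, d_{u_2}$ which in turn dominate the affine function $d_u$; since an affine function on an interval is sandwiched between its endpoint values, $d_u \leq \max(d_{u_1}, d_{u_2}) \leq \sum_z c_z(\text{that endpoint}) = $ chord value $\leq \sum_z c_z(u)$ when $\sum_z c_z$ is affine. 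The one subtlety is that $c_z$ may be affine with a \emph{kink} at an endpoint of $[a_z,b_z]$, but since $U$ contains no such endpoint in its interior, $c_z$ is genuinely affine on all of $U$, so $\sum_{z \in Z'} c_z$ is affine on $U$, and an affine function on $[u_1,u_2]$ dominating an affine function $d$ at the two endpoints dominates it everywhere. This last sentence is the whole proof; the work is just verifying affineness on $U$ from the equivalence-class property.
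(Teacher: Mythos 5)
Your final argument --- each $c_z$ is affine on $U$ because $U$ is an equivalence class containing no interval endpoint in its interior, hence $c_{Z'}=\sum_{z\in Z'}c_z$ is affine on $U$, and an affine function that dominates the affine demand $d_b=\sum_j p_j-mb$ at both endpoints dominates it throughout $[u_1,u_2]$ --- is correct and is exactly the paper's proof, which expresses the same fact via the convex combination $u=\lambda_1 u_1+\lambda_2 u_2$. The earlier detours (per-set monotonicity, splitting by slope sign, concavity) are dead ends you rightly abandon and could simply be deleted.
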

\begin{proof}
Observe that for any collection of sets $Z'$, the total capacity $c_{Z'}(u)$ for points $u$ in $U$ is a linear function as no new object begins or ends strictly inside $U$, and the capacity of any object $z$ is a linear function in its interval $[a_z,b_z]$. If $Z'$ satisfies the demands of $u_1,u_2$, then 
\beq c_{Z'}(u_1) \geq \sum_j p_j - m \cdot u_1 \qquad \text{and } \qquad 
\label{eq:l2} c_{Z'}(u_2) \geq \sum_j p_j - m \cdot u_2 \eeq
Any point $u \in U$ can be expressed as a convex combination $u = \lambda_1 u_1 + \lambda_2 u_2$ with $\lambda_1,\lambda_2 \geq 0, \lambda_1+\lambda_2 = 1$. Multiplying inequalities above by $\lambda_1$ and $\lambda_2$ respectively and adding, and using the linearity of $c_{Z'}$ gives
\[ c_Z(u) = c_{Z'}(\lambda_1 u_1 + \lambda_2 u_2) = \lambda_1 c_{Z'}(u_1) + \lambda_2 c_{Z'}(u_2) \geq \sum_j p_j - m (\lambda_1 u_1 + \lambda_2 u_2) = \sum_j p_j - m u. \qedhere \]
\end{proof}

\end{document}